\numberwithin{equation}{section}
\newtheorem{thm}{Theorem}[section]
\newtheorem{theorem}[thm]{Theorem}
\newtheorem{corollary}[thm]{Corollary}
\newtheorem{assumption}[thm]{Assumption}
\newtheorem{lemma}[thm]{Lemma}
\newtheorem{remark}[thm]{Remark}
\newenvironment{proof}[1][Proof]{\textbf{#1.} }{\ \rule{0.5em}{0.5em}}
\newcommand{\htwo}{H^{2|2}}
\newcommand{\eps}{{\varepsilon}}
\newcommand{\Id}{\operatorname{Id}}
\newcommand{\diag}{\operatorname{diag}}
\newcommand{\supp}{\operatorname{supp}}
\newcommand{\range}{\operatorname{range}}
\newcommand{\sk}[1]{\left\langle{#1}\right\rangle}
\newcommand{\cI}{\mathcal{I}}
\newcommand{\cR}{\mathcal{R}}
\newcommand{\W}{\mathcal{W}}
\newcommand{\E}{{\mathbb E}}
\newcommand{\R}{{\mathbb R}}  
\newcommand{\N}{{\mathbb N}}  
\newcommand{\Z}{{\mathbb Z}}  
\newcommand{\T}{{\mathcal{T}}}
\newcommand{\G}{{\mathcal{G}}}  
\newcommand{\cdrei}{c_3}
\newcommand{\cvier}{{c_4}}
\newcommand{\ceins}{{c_1}}
\newcommand{\czwei}{{c_2}}
\newcommand{\wnull}{{\overline W_0}}
\begin{document}
\begin{center}
  {\LARGE Fluctuations in the non-linear supersymmetric hyperbolic sigma model with long-range interactions}\\[3mm]
{\large Margherita Disertori\footnote{Institute for Applied Mathematics
\& Hausdorff Center for Mathematics, 
University of Bonn, \\
Endenicher Allee 60,
D-53115 Bonn, Germany.
E-mail: {disertori@iam.uni-bonn.de}}
\hspace{1cm} 
Franz Merkl \footnote{Mathematisches Institut, Ludwig-Maximilians-Universit{\"a}t  M{\"u}nchen,
Theresienstr.\ 39,
D-80333 Munich,
Germany.
E-mail: merkl@math.lmu.de
}
\hspace{1cm} 
Silke W.W.\ Rolles\footnote{
Department of Mathematics, CIT, 
Technische Universit{\"{a}}t M{\"{u}}nchen,
Boltzmannstr.\ 3,
D-85748 Garching bei M{\"{u}}nchen,
Germany.
E-mail: srolles@cit.tum.de}
\\[3mm]
{\small \today}}\\[3mm]
\end{center}

\begin{abstract}
  We consider a class of non-linear supersymmetric hyperbolic sigma models with
  long-range interactions on boxes in $\Z^d$ and on a hierarchical lattice.
  We prove that the random field associated to a
  marginal in horospherical coordinates has asymptotically
  arbitrarily small fluctuations for large enough interactions,
  uniformly in the size of the boxes. This can be viewed as a strong version
  of spontaneous breaking of the Lorentz boost symmetry.
  \footnote{MSC2020 subject classifications: Primary 82B20; secondary 60G60.}
  \footnote{Keywords and phrases: non-linear supersymmetric
hyperbolic sigma model, long-range model, hierarchical model.}
\end{abstract}

\section{Introduction and results}

We study a non-linear supersymmetric hyperbolic sigma model, also called $\htwo$ model,
which was introduced by Zirnbauer in \cite{zirnbauer-91} in the context of random
operators for quantum diffusion. It is defined as a statistical mechanic type model
where the spins take values on the hyperbolic supermanifold $\htwo$. 
This model is expected to share some features of the localization/delocalization
transition in random band matrices. 
Indeed, in dimensions $d\ge 3$, the model exhibits a phase transition between
a phase with long-range order, examined in \cite{disertori-spencer-zirnbauer2010}, 
and a disordered phase, studied in \cite{disertori-spencer2010}. 

In \cite{sabot-tarres2012}, Sabot and Tarr\`es showed that the $\htwo$ model is closely
linked to the vertex-reinforced jump process. After a time change, the latter is a
mixture of Markov jump processes. The corresponding environment can be described
in terms of a marginal of the $\htwo$ model transformed to horospherical coordinates.
In \cite{merkl-rolles-tarres2019}, an interpretation of the real components of the
horospherical coordinates is given in terms of crossing numbers
and rescaled local times of the vertex-reinforced jump process.
Sabot and Zeng in \cite{sabot-zeng15} reformulated the model in terms of a random
Schrödinger operator in a potential with short range dependence.
The classical Dynkin isomorphism theorem between Gaussian free fields and Markov processes
has been transferred to the $\htwo$ model and variants by Bauerschmidt, Helmuth, and
Swan in \cite{bauerschmidt-helmuth-swan-dynkin-isomorphism2019} and 
\cite{bauerschmidt-helmuth-swan-geometry-of-rw-isomorphism2021}.
In recent years, the $\htwo$ model and related models have been extensively studied, see
the survey \cite{bauerschmidt-helmuth-survey2021} and,  
e.g., \cite{bauerschmidt-crawford-helmuth-swan-spanning-forests2021},   
\cite{crawford2021}, \cite{disertori-merkl-rolles2019}, and 
\cite{disertori-rapenne-roja-molina-zeng2023}.
  
In \cite{disertori-merkl-rolles2023}, we proved transience for the vertex-reinforced
jump process on $\Z^d$, $d\ge 1$, with long-range weights which do not decay too fast.
Our proof required estimates of the $u$-marginal
of the $\htwo$-model in horospherical coordinates.
In the present paper, we prove a version of strong localization for this marginal
in a regime of long-range interactions.

\subsection{The supersymmetric hyperbolic non-linear sigma model}
\label{sec:susy}

The supermanifold $\htwo$ is constructed from the linear space $\R^{3|2}\ni v=(x,y,z,\xi,\eta)$
with $x,y,z$ even and $\xi,\eta$ odd components in a real Grassmann algebra. The space
$\R^{3|2}$ is 
endowed with the inner product $\sk{v,v'}=xx'+yy'-zz'+\xi\eta'-\eta\xi'$. 
Then, $\htwo$ is obtained by adding the non-linear
constraints $\sk{v,v}=-1$ and $z>0$, which means that only the upper hyperbola
$z=+\sqrt{1+x^2+y^2+2\xi\eta}$ is selected. More details can be found in 
\cite{phdthesis-swan2020}, \cite[Appendix]{disertori-merkl-rolles2020}, and 
\cite{disertori-spencer-zirnbauer2010}.
In many cases, it is more convenient to work with horospherical coordinates
$\R^{2|2}\ni(u,s,\overline \psi,\psi)\mapsto(x,y,z,\xi,\eta )\in\htwo$ 
defined by
\begin{align}
& x=\sinh u-\left(\frac12s^2+\overline\psi\psi\right)e^{u}, \quad
                                    y=se^{u}, \quad
                                    z=\cosh u+\left(\frac12s^2+\overline\psi\psi\right)e^{u},
  \nonumber\\
& \xi=e^{u}\overline\psi, \quad
\eta=e^{u}\psi.
\label{eq:change-of-coordinates2}
\end{align}
To construct the $\htwo$ model, we start with a finite undirected weighted complete graph
$G$ without direct loops 
with vertex set $\Lambda\cup\{\rho\}$, edge set $E$, and edge weights
$W_e\ge 0$, $e\in E$. The special vertex $\rho\notin\Lambda$ is called pinning vertex;
the corresponding edge weights $h_i:=W_{i\rho}=W_{\{i,\rho\}}$, $i\in\Lambda$, are
called pinning strengths. We assume that the subgraph 
\begin{align}
  \label{eq:def-G-plus}
  G_+=(\Lambda\cup\{\rho\},E_+) \quad\text{with}\quad
  E_+:=\{ e\in E: W_e>0\}
\end{align}
is connected; it is obtained by dropping all edges
with zero weights.

We assign to every vertex $i\in\Lambda$ a spin vector $v_i\in\htwo$ which is
described in horospherical coordinates by $(u_i,s_i,\overline \psi_i,\psi_i)$.
In addition, we set $(u_\rho,s_\rho,\overline \psi_\rho,\psi_\rho)=0$.
The action of the $\htwo$ model is defined by
\begin{align}
  A:=\sum_{e\in E}W_e(S_e-1)\quad\text{with }
  S_{ij}=S_{\{i,j\}}:=-\sk{v_i,v_j},
  \end{align}
which reads in horospherical coordinates
\begin{align}
  \label{eq:def-Sij-Bij}
  S_{ij}=B_{ij}+(\overline\psi_i-\overline\psi_j)(\psi_i-\psi_j)e^{u_i+u_j}\text{ with }
B_{ij}:=\cosh(u_i-u_j)+\frac12(s_i-s_j)^2e^{u_i+u_j};
\end{align}
cf.\  \cite[eq.\ (2.9-2.10)]{disertori-spencer-zirnbauer2010}.
To stress the difference between internal edges $e=\{i,j\}\subseteq\Lambda$
and pinning edges $\{i,\rho\}$, we write $W=(W_e)_{e\in E: e\subseteq\Lambda}$
and $h=(h_i)_{i\in\Lambda}$. As in 
\cite[eq.\ (2.12-2.13)]{disertori-spencer-zirnbauer2010}, the superintegration form of
the model is given by
\begin{align}
  \sk{f} = \sk{f}^{\Lambda }_{W,h}:=\int_{\R^\Lambda\times\R^\Lambda}
  \prod_{i\in \Lambda} \left(
  \frac{e^{-u_i}}{2\pi}\, du_i\, ds_i\,\partial_{\overline\psi_i}\partial_{\psi_i}\right)
  f e^{-A}
\end{align}
for any function $f\left((u_i,s_i,\overline\psi_i,\psi_i)_{i\in\Lambda}\right)$
such that the integral exists. 
For any integrable function $f=f(u,s)$ which does not depend on the
Grassmann variables, one has
\begin{align}
  \label{eq:def-expectation}
  \E[f]= \E^{\Lambda }_{W,h}[f]= \E^{\Lambda\cup\{\rho\}}_{W,h}[f]
  :=&\sk{f}=\int_{\R^\Lambda\times\R^\Lambda}f(u,s)\, \mu^{\Lambda }_{W,h}(du\, ds), 
\end{align}
where $\mu^{\Lambda }_{W,h}$ is the probability measure on $\R^\Lambda\times\R^\Lambda$ given by
\begin{align}
  \mu(du\, ds)=  \mu^{\Lambda }_{W,h}(du\, ds):= &
                                                   e^{-\sum_{e\in E_+}W_e(B_e-1)}\det D
  \prod_{i\in \Lambda}\frac{e^{-u_i}}{2\pi}\, du_i\, ds_i,
   \label{eq:def-mu}
\end{align}
and $D=(D_{ij})_{i,j\in \Lambda}$ is the weighted Laplacian matrix on $G_+$
with entries 
\begin{align}
  \label{eq:representation-D-with-pinning-old}
  D_{ij}=\left\{
  \begin{array}{ll}
    -W_{ij}e^{u_i+u_j} & \text{if }i\neq j, \\
    \sum_{k\in(\Lambda\cup\{\rho\})\setminus\{i\}}W_{ik}e^{u_i+u_k}
    =\sum_{k\in \Lambda\setminus\{i\}}W_{ik}e^{u_i+u_k}+h_ie^{u_i}& \text{if }i=j,
  \end{array}
  \right.
\end{align}
see \cite[(2.9)]{disertori-merkl-rolles2020}; the $u$-marginal of $\mu$,
which is obtained by evaluating the Gaussian integral in the $s$-variables, is 
also given in \cite[(1.2) and (1.5)]{disertori-spencer-zirnbauer2010}.
Since $G_+$ is connected, the matrix $D$ is positive definite and hence invertible.
By supersymmetry, $\mu^{\Lambda }_{W,h}$ is a probability measure, see 
\cite{disertori-spencer-zirnbauer2010}.

\subsection{Long-range model}
\label{subse:long-range}

Let $d\in\N$, $\overline W>0$, and $\alpha>3$.
Let $w:[1,\infty)\to (0,\infty)$ be a monotonically decreasing function
which satisfies 
\begin{align}
  &\sum_{i\in\Z^d\setminus\{0\}}w(\|i\|_\infty)<\infty\quad\text{and}\\
  & w(x)\ge8\overline W 2^{2d}\frac{(d\log_2 x)^\alpha}{x^{2d}}
    \text{ for all }x\ge 1.
    \label{eq:cond1-w}
\end{align}
We endow $\Z^d$ with the long-range edge weights 
$W_{ij}=W_{ji}=w(\|i-j\|_\infty)\geq 0$ for $i,j\in\Z^d$ with $i\neq j$. 
For $N\in\N$, we consider the boxes
$\Lambda_N:=\{0,1,\ldots,2^N-1\}^d\subseteq\Z^d$ with wired boundary conditions, given by the pinning
  \begin{align}
    \label{eq:pinning-wired-bc}
    0\leq h_i=\sum_{j\in\Z^d\setminus\Lambda_N}W_{ij}<\infty,\quad i\in\Lambda_N. 
  \end{align}
  Taking the origin as one corner of $\Lambda_N$ simplifies the notation
  below. However, due to translational invariance of the $W_{ij}$
  our results hold for any cube in $\Z^d$ with $2^{Nd}$ vertices. 
  For the $\htwo$ model on $\Lambda_N$ with these weights, we prove the following
  bound.

\begin{theorem}[Bound for the long-range model]
   \label{thm:est-cosh-euclidean}
   Let $\kappa\in(0,1]$, $\overline W>0$, and $m\ge 0$ fulfill the following
   inequalities:
   \begin{align}
     \label{eq:assumptions-kappa-overline-W-m}
     \kappa\le\frac{1}{16\ceins\log 2},\qquad 
     \overline W\ge \frac{1}{2\ceins}\frac{1}{\kappa}\log_2\frac{36}{\kappa},\qquad
     m\le\ceins\kappa\overline W
   \end{align}
   with the constant $\ceins=\ceins(\alpha):=\frac23 e^{-\sqrt{2}\frac{\alpha-1}{\alpha-3}}$.
   Then, the bound
\begin{align}
    \label{eq:eucl-est-cosh-ui}
    \E^{\Lambda_N}_{W,h}\left[(\cosh u_i)^m\right]\le 1+\kappa
  \end{align}
  holds uniformly in $N$ and $i\in\Lambda_N$.
  
  Here are two examples for possible choices of the parameters.
  \begin{itemize}
  \item For $s\in (0,1)$, we may choose $\kappa=\overline W^{-s}$ with $\overline W$ large
  enough, depending on $\alpha$ and $s$, which allows to take $m\in[0,\ceins\overline W^{1-s}]$.
\item  Another choice is $\kappa=\czwei\frac{\log\overline W}{\overline W}$
  with $\overline W$ large enough, depending on $\alpha$, with a constant $\czwei>1/(2\ceins\log 2)$, which allows for
  $m\in[0,\ceins\czwei\log\overline W]$. 
  \end{itemize}
\end{theorem}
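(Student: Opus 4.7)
The plan is to reduce the estimate for the Euclidean long-range model to an analogous estimate for a hierarchical surrogate whose couplings are dominated pointwise by the long-range ones, and then to prove the hierarchical bound by a scale-by-scale induction. The hypothesis \eqref{eq:cond1-w} is calibrated precisely so that at the dyadic scale $2^n$, aggregating the weights $w(2^n)\ge 8\overline W 2^{2d}(dn)^\alpha/2^{2dn}$ over the $\asymp 2^{2dn}$ pairs of vertices first meeting at hierarchical level $n$ inside a common level-$n$ box yields an effective cross-coupling of strength $\gtrsim \overline W\,n^\alpha$ between sibling sub-boxes. This coupling grows fast enough in $n$ (thanks to $\alpha>3$) to make the total fluctuation summable across scales.

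\textbf{Reduction to the hierarchical model.} Identify $\Lambda_N$ with the leaves of a $2^d$-ary tree of depth $N$, and for $i\neq j\in\Lambda_N$ let $n(i,j)\in\{1,\dots,N\}$ be the smallest level at which $i$ and $j$ lie in a common box; then $\|i-j\|_\infty<2^{n(i,j)}$. Since $w$ is decreasing,
\[
W_{ij}=w(\|i-j\|_\infty)\ge w\bigl(2^{n(i,j)}\bigr)=:W^{\mathrm{hier}}_{ij}.
\]
A monotonicity property of the $u$-marginal of the $\htwo$ model in the edge weights---obtained from the density in \eqref{eq:def-mu} after integrating out the $s$ and Grassmann variables, cf.\ \cite[(1.2),(1.5)]{disertori-spencer-zirnbauer2010}, and exploiting that larger ferromagnetic couplings tighten the distribution of $u_i-u_j$ around $0$---then gives
\[
\E^{\Lambda_N}_{W,h}\bigl[(\cosh u_i)^m\bigr]\le \E^{\Lambda_N}_{W^{\mathrm{hier}},h}\bigl[(\cosh u_i)^m\bigr],
\]
reducing the task to proving the bound $1+\kappa$ for the hierarchical surrogate.

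\textbf{Hierarchical induction and main obstacle.} For the hierarchical surrogate, I would proceed by induction on the depth $N$. Peeling off the coarsest scale $n=N$, split $\Lambda_N$ into its $2^d$ level-$(N-1)$ sub-boxes and integrate out the ``center-of-mass'' variable $\bar u$ of the $u$-field. Conditional on $\bar u$, the fine-scale system decouples into $2^d$ independent hierarchical models of depth $N-1$, to which the induction hypothesis applies. The remaining integral over $\bar u$ is controlled by the aggregated level-$N$ coupling of strength $\gtrsim \overline W N^\alpha$, which acts as an effective pinning and contributes a fluctuation $\delta_N\lesssim m/(\overline W N^\alpha)$. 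Iterating over all scales produces an estimate of the form
\[
\E\bigl[(\cosh u_i)^m\bigr]\le\prod_{n=1}^{N}(1+\delta_n)\le 1+\kappa,
\]
valid once the three inequalities in \eqref{eq:assumptions-kappa-overline-W-m} are imposed. The chief obstacle is this recursion: one has to control sharply how the $\det D$ factor of \eqref{eq:def-mu} is renormalized when the fine-scale $u$-variables within a level-$(n-1)$ box are integrated out, and to perform a Laplace-type optimization with an auxiliary tilt parameter proportional to $\sqrt 2$ that extracts the explicit constant $\ceins(\alpha)=\tfrac23 e^{-\sqrt 2(\alpha-1)/(\alpha-3)}$. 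The blow-up of $\ceins^{-1}$ as $\alpha\downarrow 3$ reflects the borderline between summable and marginal hierarchical couplings and is the source of the hypothesis $\alpha>3$; the two worked examples at the end of the statement are then routine verifications that \eqref{eq:assumptions-kappa-overline-W-m} is satisfied.
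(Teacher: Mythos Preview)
Your reduction to a hierarchical surrogate via monotonicity is the same opening move as the paper's. Be aware, however, that the monotonicity you invoke is not a soft consequence of ``larger ferromagnetic couplings tighten the distribution'': the paper cites Poudevigne's result \cite[Theorem~6]{poudevigne22}, which applies to convex functions of $e^{u_i}$ and is the reason for first reducing to $m\ge 1$.

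The genuine gap is in your hierarchical step. The claim that, after integrating out a center-of-mass variable $\bar u$, the system decouples into $2^d$ independent depth-$(N-1)$ hierarchical models is not justified for the $\htwo$ model. The measure \eqref{eq:def-mu} is non-linear: the interaction is $\cosh(u_i-u_j)+\tfrac12(s_i-s_j)^2e^{u_i+u_j}$, and the factor $\det D(u)$ couples all vertices globally. There is no linear change of variables that isolates a single coarse mode and leaves the rest factorized; the level-$N$ coupling involves all cross-box pairs $(i,j)$ through $B_{ij}$, not a single function of block averages. You flag this yourself as the ``chief obstacle,'' but the recursion $\delta_n\lesssim m/(\overline W n^\alpha)$ is asserted rather than derived, and the renormalization of $\det D$ under partial integration is precisely what makes a naive RG for this model hard.

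The paper sidesteps this entirely. Instead of a scale-by-scale recursion, it uses an exact marginalization identity \cite[Cor.~2.3]{disertori-merkl-rolles2020}: the law of $u_i$ in the hierarchical model coincides with its law in a one-dimensional effective model on an antichain $i_1,\dots,i_{N+1}=\rho$ with nearest-neighbor weights $W_{i_{l-1}i_l}\ge\overline W\,l^\alpha$. That one-dimensional chain is then controlled (Theorem~\ref{thm:one-dim}) by a supersymmetric Ward identity with protection cutoffs $\chi_e$ and a determinant bound, developed in Sections~\ref{sec:bounds-on-dets}--\ref{sec:boundingoned}. The constant $\ceins(\alpha)=\tfrac23 e^{-\sqrt2(\alpha-1)/(\alpha-3)}$ does arise there, but from the sum $\sum_l\sqrt{\delta_l}$ with $\delta_l=l^{-(\alpha-1)}$ in the cutoff argument, not from a Laplace optimization over a tilt parameter.
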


Note that increasing $\overline W$ increases the exponent  $m$ that we can control.

\subsection{Hierarchical model}
\label{sec:hierarchical-model}
Let $N\in\N$. 
The proof of Theorem~\ref{thm:est-cosh-euclidean} relies on a comparison
with a model on the complete graph with vertex set $\Lambda_N=\{0,1\}^N$ 
and hierarchical interactions, which is of independent interest. We interpret
$\Lambda_N$ as the set of
leaves of the binary tree $\T=\cup_{n=0}^N\{0,1\}^n$. 
For $i=(i_0,\ldots,i_{N-1}),j=(j_0,\ldots,j_{N-1})\in\{0,1\}^N$, let
\begin{align}
  d_H(i,j)=\min(\{l\in\{0,\ldots,N-1\}:\, i_k=j_k\text{ for all }k\ge l\}\cup\{N\})
\end{align}
be the hierarchical distance between $i$ and $j$, which equals 
the distance to the least common ancestor in $\T$. Let $w^H:[1,\infty)\to (0,\infty)$.
We choose hierarchical weights
\begin{align}
  W_{ij}^H=w^H(d_H(i,j)), \quad i,j\in\Lambda_N, i\neq j, 
\end{align}
and uniform pinning $W_{i\rho}=h^H>0$
for all $i\in\Lambda$.

For this hierarchical $\htwo$ model, we prove the following result.
\begin{theorem}[Bound for the hierarchical model]
  \label{thm:hierarchical}
  Let $\alpha>3$. 
  Assume that $\kappa\in(0,1]$, $\overline W>0$,
  and $m\ge 0$ fulfill the inequalities \eqref{eq:assumptions-kappa-overline-W-m}.   
  Consider the above hierarchical model with the weight function 
  $w^H$ and the pinning $h^H$ fulfilling 
  \begin{align}
    \label{eq:ass-weights-pinning}
  w^H(r)\ge 8\overline W 2^{-2r}r^\alpha \text{ for } r\in\N, \quad
  h^H\ge 2\overline W2^{-N}(N+1)^\alpha. 
\end{align}
Then, for all $i\in\Lambda_N$, one has  
\begin{align}
  \label{eq:hierarchical-est-cosh-ui}
  \E_{W^H,h^H}^{\Lambda_N}[(\cosh u_i)^m]\le 1+\kappa. 
  \end{align}
  Again, this allows to take the choices for $\kappa$ and $m$ described
  in Theorem \ref{thm:est-cosh-euclidean} in the two examples
  below \eqref{eq:eucl-est-cosh-ui}. 
\end{theorem}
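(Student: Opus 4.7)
My plan is to proceed by induction on $N$, exploiting the self-similar binary-tree structure behind the hierarchical distance $d_H$. A preparatory reduction is to assume equality in \eqref{eq:ass-weights-pinning}. This follows once one checks that $\E^{\Lambda}_{W,h}[(\cosh u_i)^m]$ is non-increasing in each $W_e$ and each $h_j$ -- a ``more interaction implies more localization'' monotonicity for the $H^{2|2}$ model that can be established by differentiating the superintegration form \eqref{eq:def-expectation} and combining $\det D>0$ with supersymmetric Ward identities.

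The base case $N=0$ is a single spin with pinning $h^H=2\overline W$. The $u$-marginal is the explicit one-dimensional density $\propto e^{-h^H(\cosh u-1)}\sqrt{h^H e^u}\,e^{-u}\,du$ (obtained after integrating out $s$ and using $\det D=h^H e^u$), for which $\E[(\cosh u)^m]\le 1+\kappa$ under $m\le \ceins\kappa\overline W$ follows from a Laplace estimate around $u=0$. For the inductive step at level $N\ge 1$, I split $\Lambda_N=\Lambda_N^{(0)}\sqcup\Lambda_N^{(1)}$ into the two top-level subtrees, each isomorphic to $\Lambda_{N-1}$ and joined only by edges of weight $w^H(N)$. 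The target is the comparison
\begin{align*}
\E^{\Lambda_N}_{W^H,h^H}[(\cosh u_i)^m]\le \E^{\Lambda_{N-1}}_{W^H,\tilde h}[(\cosh u_i)^m],\qquad \tilde h:=h^H+2^{N-1}w^H(N),
\end{align*}
so that the right-hand side is the hierarchical model with boosted uniform pinning on the subtree containing $i$. The arithmetic $\tilde h\ge 2^{N-1}w^H(N)\ge 2^{N-1}\cdot 8\overline W 2^{-2N}N^\alpha=4\overline W 2^{-N}N^\alpha = 2\overline W 2^{-(N-1)}N^\alpha$ matches the pinning required by the hypotheses at level $N-1$, so the induction closes.

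The principal obstacle is to justify this comparison, because integrating out the $\Lambda_N^{(1)}$ variables in the $H^{2|2}$ model does not merely raise the pinning on $\Lambda_N^{(0)}$: it produces a non-local effective interaction, and the naive ``pin the far subtree at $0$'' inequality runs in the wrong direction (it would lower, not upper-bound, the expectation). I would approach it by first conditioning on the $\Lambda_N^{(1)}$ variables and using the inductive hypothesis inside $\Lambda_N^{(1)}$ to show that these variables are sharply concentrated near $0$, then using a supersymmetric Ward identity in the horospherical representation \eqref{eq:def-Sij-Bij} to translate this concentration into effective pinning controlled by $\tilde h$, with the residual fluctuation error absorbed into the $(1+\kappa)$ slack. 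If this one-shot comparison turns out to be too lossy, the fallback is a simultaneous multi-scale induction that tracks $\E[(\cosh u_i)^{m_n}]$ at every level $n\in\{0,\ldots,N\}$ of $\T$ with a decreasing sequence of exponents $(m_n)$ calibrated so that the scale-$n$ interaction is dominated by the pinning accumulated from the preceding scales.
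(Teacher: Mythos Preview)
Your central comparison $\E^{\Lambda_N}_{W^H,h^H}[(\cosh u_i)^m]\le\E^{\Lambda_{N-1}}_{W^H,\tilde h}[(\cosh u_i)^m]$ is not merely hard to prove---it points the wrong way. The right-hand side is exactly the model obtained from the left by wiring the far subtree $\Lambda_N^{(1)}$ to $\rho$, and Poudevigne's monotonicity \cite[Theorem~6]{poudevigne22} (the same result you would need for your preparatory reduction to equality in \eqref{eq:ass-weights-pinning}, and which is not a quick differentiation argument) gives the \emph{reverse} inequality for $m\ge 1$; combined with yours this would force equality, which already fails for $N=1$. Your repairs do not salvage the step: conditioning on $u|_{\Lambda_N^{(1)}}$ yields an effective model on $\Lambda_N^{(0)}$ with random, $u$-dependent pinning to which the inductive hypothesis does not apply, and the multi-scale fallback with decreasing exponents is a gesture rather than an argument. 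The induction on $N$ as posed cannot close.

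The paper avoids induction on $N$ altogether. By symmetry of the hierarchical weights all $u_i$ are identically distributed, so it suffices to treat $i_1=(0,\ldots,0)$. An exact marginal identity from prior work, \cite[Cor.~2.3]{disertori-merkl-rolles2020}, shows that the law of $u_{i_1}$ under $\E^{\Lambda_N}_{W^H,h^H}$ coincides with its law in an effective \emph{one-dimensional} model on the antichain $i_1,\ldots,i_N,i_{N+1}:=\rho$ with nearest-neighbour weights $W_{i_{l-1}i_l}=2^{2l-3}w^H(l)\ge\overline W\,l^\alpha$ and $W_{i_Ni_{N+1}}=2^{N-1}h^H\ge\overline W(N+1)^\alpha$. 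The bound \eqref{eq:hierarchical-est-cosh-ui} then follows directly from Theorem~\ref{thm:one-dim} (case~P2 with $x=i_1$, $y=\rho$). All the work lives in Theorem~\ref{thm:one-dim}, proved via the protected Ward identities and determinant estimates of Section~\ref{sec:bounds-onedimensionaleff}; your plan bypasses it entirely.
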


\subsection{Inhomogeneous one-dimensional model}
\label{sec:onedimensionaleff}
The expectation of $(\cosh u_{i})^{m}$ in the hierarchical model will be reduced to
a similar expectation in an effective non-homogeneous one-dimensional model with vertex set 
$\{1,\ldots,N\}$ and long-range interactions as will be shown 
in \eqref{eq:hierantichain} below based on 
 \cite[Cor.~2.3]{disertori-merkl-rolles2020}. The vertices in this effective model
 correspond to length scales in the original hierarchical model.
The following model describes a one-dimensional chain with inhomogeneous weights and purely
nearest neighbor interactions. In the application to our hierarchical model, additional long-range interactions will be included as well, but these can be easily removed via a monotonicity result given in Lemma~\ref{le:monotonicity}.

\paragraph{One-dimensional model with non-homogeneous interactions.}
For any $N\in\N$, we consider the graph
$([N],E_N)$ with vertex set $[N]:=\{1,2,\ldots,N\}$,
undirected edge set $E_N=\{\{i-1,i\}:2\le i\le N\}$, and weights
$W_{\{i-1,i\}}\ge\overline Wi^\alpha$, $i\in\N$, depending on positive parameters $\overline W$
and $\alpha$. Later these two parameters will coincide with the corresponding
parameters in the hierarchical model from Section \ref{sec:hierarchical-model}.
We consider two versions of pinning.
\begin{enumerate}
\item[P1.] Either the pinning point $\rho$ does not belong to $[N]$.
  In this case, we take any pinning vector 
$h=(h_j)_{j=1,\ldots,N}\in[0,\infty)^N\setminus\{0\}$. 
\item[P2.] Or the pinning point $\rho$ coincides with one of the vertices
  $[N]$, e.g.\ $\rho=N$. Here, we take the pinning vector
  $h=(h_j)_{j=1,\ldots,N,j\neq\rho}$ with $h_j=W_{j\rho}$. 
\end{enumerate}

The following theorem is the main technical
 result of this paper. It gives bounds on this family of non-homogeneous one-dimensional models,
 possibly with additional long-range interactions.
We consider two sets of edges $E'$ and $E''$.
Edges in $E'$ are nearest-neighbor, while edges in $E''$ are typically
not, see remarks below the theorem. 
For any edge $e$ in the complete graph over $[N]$, we write $e =\{x_e,y_e\}$ with $x_e<y_e$.

\begin{theorem}[Bounds for the inhomogeneous one-dimensional model]
  \label{thm:one-dim}
Let $\kappa\in(0,1]$, $\alpha>3$ and $\gamma\ge 0$ with $\alpha-\gamma>3$.
There exist $\wnull=\wnull(\kappa,\alpha,\gamma)>0$ and
  $\cdrei=\cdrei(\kappa,\alpha,\gamma)>0$ such that for all
  $\overline W\ge\wnull$ the following hold for the above model. 

  Let  $E',E''$ be disjoint subsets of the edge set
  $\{\{x,y\}\subseteq[N]:x\neq y\}$ of the complete graph over $[N]$.
  We assume that every $e\in E'$
  is a nearest neighbor edge, $y_e-x_e=1$, i.e.\ $E'\subseteq E_N$.
  Moreover, we assume that for any two
different $e,\tilde e\in E'\cup E''$ the open intervals
$(x_e,y_e)$ and $(x_{\tilde e},y_{\tilde e})$ are disjoint.
Set $m_x:=\cdrei\overline W x^\gamma$ for all $x\in\N$. 
Then, for all exponents $m_e$ satisfying $0\le m_e<W_e$ for $e\in E'$ and 
$0\le m_e\le m_{x_e}$ for $e\in E''$, 
we have   
\begin{align}
  \label{eq:main-thm-bound1}
  \E_{W,h}^{[N]}\left[\prod_{e\in E'}B_e^{m_e}\cdot\prod_{e\in E''}B_e^{m_e}
  \right]\le\prod_{e\in E'}\frac{1}{1-\frac{m_e}{W_e}}\cdot (1+\kappa)^{|E''|}
 \le (1+\kappa)^{|E'|+|E''|},
\end{align}
uniformly in $N$. In particular, for all $e=\{x,y\}$ with 
  $x,y\in[N]$ with $x<y$ and all $m\in[0,m_x]$, one has
  \begin{align}
    \label{eq:bound-cosh-ux-uy-main}
    \E_{W,h}^{[N]}[(\cosh(u_x-u_y)^m]\le\E_{W,h}^{[N]}[B_e^m]\le 1+\kappa.
  \end{align}
In the case P1 of pinning, all estimates are also uniform in the pinning $h$.

The same results hold if we add arbitrarily many additional non-nearest
neighbor edges~$e$ with arbitrary non-negative weights $W_e\ge 0$
to the graph $([N],E_N)$. 
\end{theorem}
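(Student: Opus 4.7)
My plan is to proceed by induction on the number of marked edges $|E'|+|E''|$, peeling off edges from the ``outside in'': at each step, the disjoint-intervals hypothesis guarantees a marked edge $e_0$ whose open interval contains no other marked edge, so that after integrating out the intermediate vertices the contribution of $e_0$ decouples from the remaining marked edges. Before invoking the induction, I would apply Lemma~\ref{le:monotonicity} to remove any extra non-nearest-neighbor edges with positive weights from the underlying graph, reducing to the pure chain $([N],E_N)$; this handles the final sentence of the theorem.

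The workhorse of the proof is a single-edge inequality for a nearest-neighbor edge $e=\{i-1,i\}$ with weight $W_e$ and exponent $0\le m_e<W_e$: for any non-negative functional $F$ of the remaining variables,
\begin{align*}
\E_{W,h}^{[N]}[B_e^{m_e}\, F]\le \frac{1}{1-m_e/W_e}\,\E_{W',h}^{[N]}[F],
\end{align*}
where $W'$ agrees with $W$ except that $W_e$ is reduced to $W_e-m_e>0$. I would derive this via a direct calculation with the measure \eqref{eq:def-mu}, using the pointwise inequality $B_e^{m_e}\le e^{m_e(B_e-1)}$ (valid since $B_e\ge 1$) to absorb the factor into the Boltzmann weight, with the prefactor $(1-m_e/W_e)^{-1}$ emerging from the rescaling of the $s$-Gaussian in \eqref{eq:def-Sij-Bij} whose variance contains $W_e\, e^{u_{i-1}+u_i}$ and from the corresponding modification of $\det D$. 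Iterating this inequality over all $e\in E'$ produces $\prod_{e\in E'}(1-m_e/W_e)^{-1}$ multiplying an expectation in a modified model, which is bounded by $1$ by supersymmetry as soon as the reduced weights remain positive.

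For a long-range edge $e\in E''$ with $L=y_e-x_e\ge 2$, I would use a hyperbolic telescoping bound of the form
\begin{align*}
B_e^{m_e}\le C_L^{m_e}\prod_{k=x_e}^{y_e-1} B_{\{k,k+1\}}^{m_e},
\end{align*}
where $C_L$ is a combinatorial prefactor growing at worst exponentially in $L$, derived from $\cosh(a+b)\le 2\cosh a\cosh b$ combined with a Cauchy-Schwarz type estimate for the $s$-part. The constraint $m_e\le m_{x_e}=\cdrei\overline W x_e^\gamma$ together with the weight lower bound $W_{\{k,k+1\}}\ge\overline W(k+1)^\alpha$ and the gap $\alpha-\gamma>3$ keeps each induced nearest-neighbor exponent a small fraction of the corresponding weight, so the resulting $(1-m/W)^{-1}$ factors each contribute at most $1+\kappa$, while the combinatorial prefactor $C_L^{m_e}$ can be absorbed into the same $1+\kappa$ by choosing $\cdrei$ small and $\wnull(\kappa,\alpha,\gamma)$ large enough. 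The disjointness of the intervals $(x_e,y_e)$ for different marked edges is precisely what ensures these telescoping products can be combined without any nearest-neighbor edge being counted twice, so the inductive step closes cleanly.

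The main obstacle will be the long-range step: producing a sufficiently sharp telescoping bound with a prefactor $C_L$ that can be tamed by the gap $\alpha-\gamma>3$, and propagating the bound through the induction while keeping the constants uniform in $N$ and in the positions $x_e,y_e$. The precise exponent $\gamma$ and the size of $\wnull$ should be dictated entirely by this estimate; the single-edge inequality, by contrast, should be fairly routine once the correct change of variables in the $s$-Gaussian is identified.
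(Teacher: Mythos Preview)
Your single-edge absorption inequality for nearest-neighbor $e\in E'$ is in fact correct and gives a clean alternative to the paper's Ward-identity treatment of those edges: from $B_e^{m_e}\le e^{m_e(B_e-1)}$ one obtains $B_e^{m_e}\mu_W\le(\det D_W/\det D_{W'})\,\mu_{W'}$ pointwise, and since $\det D$ is affine in each $\W_e$ with non-negative coefficients (matrix-tree theorem), the ratio is at most $W_e/(W_e-m_e)=(1-m_e/W_e)^{-1}$. Iterating over $E'$ is fine because the disjoint-interval hypothesis prevents any edge from being hit twice.

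The genuine gap is the long-range step. Your telescoping gives $B_e^{m_e}\le 2^{(y_e-x_e-1)m_e}\prod_{k}B_{k-1,k}^{m_e}$, and the theorem allows $m_e$ up to $\cdrei\overline W x_e^\gamma\ge\cdrei\overline W$. Hence whenever $y_e-x_e\ge 2$ the prefactor is at least $2^{\cdrei\overline W}$, which diverges as $\overline W\to\infty$ for every fixed $\cdrei>0$. Taking $\wnull$ large makes this \emph{worse}, not better, because $m_e$ scales with $\overline W$; and $\cdrei$ is not allowed to depend on $\overline W$. The nearest-neighbor product $\prod_k(1-m_e/W_{k-1,k})^{-1}$ is indeed $O(1)$ thanks to $\alpha-\gamma>3$, but it cannot compensate an exponentially diverging prefactor. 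So the bound your scheme produces for a single $e\in E''$ is of order $2^{\cdrei\overline W}$, not $1+\kappa$.

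The paper avoids this obstruction by never using the naked telescoping. It inserts cutoffs $\chi_{k-1,k}=1_{\{B_{k-1,k}<1+\delta_k\}}$ with $\delta_k=k^{-(\alpha-\gamma-1)}$ and expands $1=\overline\chi_{x_ey_e}+\sum_j\chi^c_{j-1,j}\overline\chi_{jy_e}$. On the protected part $\overline\chi_{x_ey_e}$, a Ward identity with cutoffs (Lemma~\ref{le:upper-bound-prod-B-chi-det}) together with a lower bound on $\det(\Id-M\G)$ via effective resistances (Theorem~\ref{thm:summary-section}, Lemma~\ref{le:exp-one-product}) controls $\E[B_e^{m_e}\overline\chi_{x_ey_e}]$ directly, with no factor $2^L$ at all. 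On the bad part $\chi^c_{j-1,j}$ one \emph{does} telescope, but one simultaneously pays the Chebyshev factor $(1+\delta_j)^{-p_j}$ with $p_j=4\cvier\overline W j^\alpha$; then $(1+\delta_j)^{-p_j}\le e^{-2\cvier\overline W j^{\gamma+1}}$ is exactly what kills $2^{(j-x_e)m_e}\le e^{\cvier\overline W j^{\gamma+1}}$. Some mechanism of this kind---restricting to a good event on which the increments are small, and paying an exponentially small Chebyshev price on its complement---is essential; the pointwise telescoping bound alone cannot close the argument.
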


Note that the exponent $m_e$ with $e=\{x_e,y_e\}$ may increase as $x_e$
increases, due to the non-homogeneous nature of the nearest-neighbor weights
in the model. 

Let us explain the different roles of $E'$ and $E''$:
  Edges $e\in E'$ can be treated in a simple way using supersymmetric Ward identities
  (see equation \eqref{eq:bound-prod-Be-without-protection}) resulting in the bound
  $(1-m_e/W_e)^{-1}$;
  however, these edges need to have a positive weight $W_e>0$
  with $m_e/W_e<1$.
  For edges with small positive $W_e$ this provides a bound only for 
  exponents $m_e$ so small that it is not useful anymore. For edges with zero weight
  the strategy does not work at all. This type of edges forms the set $E''$.
  The corresponding estimate is the most technical part of this work and 
  requires the weights of nearest-neighbor edges to be large enough.
  This technical construction may also be applied to nearest-neighbor edges, but
  the corresponding bound is weaker, i.e.\ $1+\kappa$ instead of $(1-m_e/W_e)^{-1}$.

\begin{lemma}[Quantitative version of Theorem~\ref{thm:one-dim}]
  \label{le:thm-quantitative}
  \mbox{}\\
  In Theorem~\ref{thm:one-dim}, one may choose $\gamma\in[0,\alpha-3)$, 
  \begin{align}
    \label{eq:def-W0}
  \wnull=\wnull(\kappa,\alpha,\gamma)=-\frac{1}{2\cvier}\log \frac{\kappa}{36}, \quad
  \cdrei=  \frac{\cvier}{\log 2} 
\end{align}
with the constant
\begin{align}
  \label{eq:def-czwei}
  \cvier=\cvier(\kappa,\alpha,\gamma)=\min\left\{\frac{2\log 2}
  {3e^{\sqrt{2}\frac{\alpha-\gamma-1}{\alpha-\gamma-3}}}\kappa,\frac{1}{16}\right\}
  .
\end{align}
In the case $\gamma=0$ and $\kappa=1$, we have 
\begin{align}
  \label{eq:expression-W-0-special-case}
  \wnull=\max\left\{16\log 6, 
  \frac32 e^{\sqrt{2}\frac{\alpha-1}{\alpha-3}}\log_26\right\}.
\end{align}
\end{lemma}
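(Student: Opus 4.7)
My plan is to derive Lemma~\ref{le:thm-quantitative} as a direct quantitative readout of the proof of Theorem~\ref{thm:one-dim} rather than running a new argument. I would rewrite that proof with all constants kept explicit and then verify that the claimed values of $\cvier$, $\wnull$, and $\cdrei$ satisfy each constraint that arises. Based on the statement of Theorem~\ref{thm:one-dim}, the proof almost certainly decouples into two regimes: edges $e\in E'$ are handled by a supersymmetric Ward identity yielding the explicit factor $(1-m_e/W_e)^{-1}$ visible in \eqref{eq:main-thm-bound1}, while edges $e\in E''$ require the one-step estimate $\E[B_e^{m_e}]\le 1+\kappa$ uniformly in $m_e\le m_{x_e}$. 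All three constants $\cvier$, $\wnull$, $\cdrei$ enter only through the latter estimate.

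The quantitative content of the one-step bound should arise from telescoping through the nearest-neighbor edges between $x_e$ and $y_e$ and summing a tail of the form $\sum_{j>x_e} j^{\gamma-\alpha}$ against the exponent $m_e/(\overline W j^\alpha)$. Requiring this sum to stay bounded by a constant multiple of $\log(1+\kappa)$ produces a constraint of the form $\cvier\le\mathrm{const}(\alpha,\gamma)\cdot\kappa$, and the ratio $(\alpha-\gamma-1)/(\alpha-\gamma-3)$ in \eqref{eq:def-czwei} is consistent with integrating by parts twice in that tail sum. The factor $e^{\sqrt{2}(\alpha-\gamma-1)/(\alpha-\gamma-3)}$ looks like the value at the optimum of an AM--GM-type trade-off $\inf_{\lambda>0}(\lambda+\lambda^{-1}c)$ with $c=(\alpha-\gamma-1)/(\alpha-\gamma-3)$, so I would read $\lambda$ off the proof and substitute. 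The cap $\cvier\le 1/16$ in turn comes from a second-order Taylor expansion step (applied to $\log(1-t)$ or to $\cosh$-type bounds) whose argument has to be kept small. The identity $\cdrei=\cvier/\log 2$ is exactly the bookkeeping conversion from $\prod_e(1+\kappa)$ into an exponent $\cdrei\overline W x^\gamma$ after taking logarithms in base $2$, and $\wnull$ is determined by demanding the residual one-edge bound, which I expect to take the form $1+36\, e^{-2\cvier\overline W}$, to be $\le 1+\kappa$; this yields exactly $\wnull=-(2\cvier)^{-1}\log(\kappa/36)$.

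With those constants fixed, the closing formula \eqref{eq:expression-W-0-special-case} is a short substitution: taking $\gamma=0$ and $\kappa=1$, the first branch $\cvier=1/16$ gives $\wnull=8\log 36=16\log 6$, while the second branch $\cvier=\tfrac{2\log 2}{3}\,e^{-\sqrt{2}(\alpha-1)/(\alpha-3)}$ gives $\wnull=\tfrac{3}{2}\,e^{\sqrt{2}(\alpha-1)/(\alpha-3)}\log_2 6$, and the required $\wnull$ is the maximum. The main obstacle I expect is bookkeeping inside the induction rather than any new ingredient: each inductive step must absorb some $\kappa_k$ in such a way that the $\kappa_k$ sum to $\kappa$ without loss, and the optimization producing the $\sqrt{2}$ in the exponent has to be performed exactly, since any slack there would immediately worsen both $\cvier$ and, through $\wnull=-(2\cvier)^{-1}\log(\kappa/36)$, the threshold $\wnull$.
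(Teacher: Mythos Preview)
Your high-level plan---treat Lemma~\ref{le:thm-quantitative} as a bookkeeping readout from the proof of Theorem~\ref{thm:one-dim} and then substitute $\gamma=0$, $\kappa=1$ to obtain \eqref{eq:expression-W-0-special-case}---is exactly what the paper does, and your verification of \eqref{eq:expression-W-0-special-case} from \eqref{eq:def-W0}--\eqref{eq:def-czwei} is correct.

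However, your speculated mechanisms for where the constants arise inside the proof of Theorem~\ref{thm:one-dim} are mostly wrong, and if you tried to reconstruct that proof from these guesses you would not arrive at the stated constants. There is no induction with partial $\kappa_k$, no AM--GM optimisation, and no integration by parts. The factor $e^{\sqrt{2}(\alpha-\gamma-1)/(\alpha-\gamma-3)}$ is simply the constant $C_\delta=e^{\sqrt{2}\sum_{l\ge 1}\sqrt{\delta_l}}$ coming from a cutoff choice $\delta_j=j^{-(\alpha-\gamma-1)}$, with the ratio $(\alpha-\gamma-1)/(\alpha-\gamma-3)$ produced by the elementary bound $\sum_{l\ge 1}l^{-(\alpha-\gamma-1)/2}\le(\alpha-\gamma-1)/(\alpha-\gamma-3)$. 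The cap $\cvier\le 1/16$ is not a Taylor remainder but the constraint $(m+p_j)/W_{j-1,j}\le 1/2$ with $p_j=4\cvier\overline W j^\alpha$. The relation $\cdrei=\cvier/\log 2$ comes from matching $m_x\log 2$ against $p_j\delta_j/2$ in an exponent, not from a base-$2$ logarithm of a product. And $\wnull$ arises from the requirement $e^{-\cvier\overline W}\le\sqrt{\kappa}/6$, used to make a geometric tail $\le\kappa/2$; the ``$36$'' is $6^2$ from squaring this, not a residual $1+36\,e^{-2\cvier\overline W}$. The actual argument for $E''$ is a partition of unity on the event that all nearest-neighbour $B_{l-1,l}$ are small, combined with a Chebyshev-type bound on the complement, rather than a telescoping sum.
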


\paragraph{Comparison with previous results.}
In the present paper, our bound for the expectation of $(\cosh u_i)^m$
is asymptotically arbitrarily close to 1 as $\overline W\to\infty$,
which means localization of
the $u_i$ near 0 in a strong sense. The exponent $m$ can be taken even 
proportional to $\overline W$. Our proof requires 
$\overline W$ to be large enough and $\alpha>3$. It relies heavily on
the supersymmetry of the model. 

On the other hand, the expectation of $e^{\pm mu_i}$ can
also be bounded by  
$e^{\operatorname{const}(\alpha,d)(2m+1)^2/\overline W}$ using softer methods,
cf.\ \cite[formula (3.12)]{disertori-merkl-rolles2023}.
This bound holds for all $m\ge 1$, $\overline W>0$, and $\alpha>1$. It is
asymptotically arbitrarily close to 1 in the regime $m=o(\sqrt{\overline W})$
as $\overline W\to\infty$, but gives only a weak estimate in the regime
$m>\sqrt{\overline W}$. It is nevertheless sufficient 
to prove transience of the vertex-reinforced jump process
in \cite{disertori-merkl-rolles2023}.

\paragraph{How this paper is organized.}
The bounds for the one-dimensional model stated in Theorem \ref{thm:one-dim} are
proven in Section \ref{sec:bounds-onedimensionaleff}, using supersymmetry of the model.
A technical key point is a lower bound for a determinant obtained from a
Grassmann Gaussian integral. This bound is provided in Section \ref{sec:bounds-on-dets},
even in a more general context than needed for this paper, since it is of independent
interest. Theorems \ref{thm:est-cosh-euclidean} and \ref{thm:hierarchical} are then proven
in Section \ref{sec:bounds-hier-lr}. 
  
The constants $\wnull$ and $\ceins,\ldots,\cvier$ keep their meaning throughout the whole
paper.

\section{Bounds in the one-dimensional model}
\label{sec:bounds-onedimensionaleff}

\subsection{Strategy of the proof of Theorem \ref{thm:one-dim}}

In this section, we consider again the setup of Section \ref{sec:susy} with
the complete graph $G=(\Lambda\cup\{\rho\},E)$. 
Recall also the definition \eqref{eq:def-G-plus} of the edge set $E_+$.
For $e\in E$, let $e_+$ and $e_-$ denote the two endpoints.
For bookkeeping reasons only, we give $e$ the direction $e_+\to e_-$.
The
following estimate was shown in \cite{disertori-merkl-rolles2023}
using a Ward identity. 

\begin{lemma}[{\cite[Lemma 4.2]{disertori-merkl-rolles2023}}]
 For $0\le m_e<W_e$, $e\in E_+$, one has 
  \begin{equation}\label{eq:bound-prod-Be-without-protection}
 \E\left[\prod_{e\in E_+} (\cosh (u_{e_{+}}-u_{e_{-}}))^{m_e}\right]\leq    \E\left[\prod_{e\in E_+}B_e^{m_e}\right]\le
    \prod_{e\in E_+}\frac{1}{1-\frac{m_e}{W_e}}
      .
  \end{equation}
\end{lemma}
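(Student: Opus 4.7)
The first inequality is immediate from the definition of $B_e$: since $B_e=\cosh(u_{e_+}-u_{e_-})+\frac12(s_{e_+}-s_{e_-})^2 e^{u_{e_+}+u_{e_-}}\ge\cosh(u_{e_+}-u_{e_-})\ge 1$, raising to the nonnegative power $m_e$ and taking products preserves the pointwise inequality, hence the inequality in expectation.

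For the upper bound on $\E[\prod_e B_e^{m_e}]$, my plan is to combine two ingredients: the elementary pointwise bound $B^m\le e^{m(B-1)}$ (valid for $B\ge 1$ and $m\ge 0$, since $\ln B\le B-1$) and the supersymmetric Ward identity encoded in $\int d\mu_{W'}=1$ for any positive weights $W'$. Set $\tilde W_e:=W_e-m_e$, which is strictly positive by the assumption $m_e<W_e$, so $\mu_{\tilde W}$ is a well-defined probability measure. Using the explicit Lebesgue density of $\mu_W$ together with the pointwise bound yields
\begin{align*}
  \E_W\!\left[\prod_{e\in E_+}B_e^{m_e}\right]
  \le\int e^{-\sum_e\tilde W_e(B_e-1)}\det D_W\prod_{i\in\Lambda}\frac{e^{-u_i}}{2\pi}\,du_i\,ds_i,
\end{align*}
while the SUSY normalization of $\mu_{\tilde W}$ reads
\begin{align*}
  1=\int e^{-\sum_e\tilde W_e(B_e-1)}\det D_{\tilde W}\prod_{i\in\Lambda}\frac{e^{-u_i}}{2\pi}\,du_i\,ds_i.
\end{align*}
Thus the problem reduces to the pointwise determinant inequality $\det D_W\le\prod_{e\in E_+}(W_e/\tilde W_e)\det D_{\tilde W}$.

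The main obstacle, and the step I expect to require the most care, is this last inequality, which I would establish edge by edge. Each weight $W_e$ enters $D$ as a rank-one additive perturbation with coefficient $e^{u_{e_+}+u_{e_-}}$ (or $e^{u_i}$ if $e$ is a pinning edge), so $\det D$ is affine in each $W_e$. Writing $\det D=a_eW_e+b_e$ with $a_e,b_e\ge 0$---for instance via the matrix-tree theorem, where $a_e$ sums the contributions of spanning trees of $\Lambda\cup\{\rho\}$ containing $e$ and $b_e$ those avoiding $e$---the one-edge estimate $(a_eW_e+b_e)/(a_e\tilde W_e+b_e)\le W_e/\tilde W_e$ is equivalent to $b_e(W_e-\tilde W_e)\ge 0$, which holds. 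Iterating this substitution edge by edge over $e\in E_+$ produces the multi-edge bound, and substituting back yields the advertised $\prod_e(1-m_e/W_e)^{-1}$.
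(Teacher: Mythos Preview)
Your proof is correct, but it follows a genuinely different route from the one the paper (via the cited reference) uses.

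The paper's approach applies the Ward identity $\sk{\prod_e S_e^{m_e}}=1$ directly: expanding $S_e^{m_e}=B_e^{m_e}\exp(\overline\psi^tF_{\cdot e}m_eQ_eF^t_{e\cdot}\psi)$ and integrating out the Grassmann variables produces the identity $1=\E\bigl[\prod_e B_e^{m_e}\det(\Id-M\G)\bigr]$, after which one needs a pointwise lower bound on $\det(\Id-M\G)$. Your argument bypasses $\G$ entirely: the elementary inequality $B^m\le e^{m(B-1)}$ converts the power $B_e^{m_e}$ into a shift of the edge weight $W_e\mapsto\tilde W_e=W_e-m_e$, and then only the SUSY normalization $\int d\mu_{\tilde W}=1$ is used. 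The remaining work is the pointwise determinant comparison $\det D_W\le\prod_e(W_e/\tilde W_e)\det D_{\tilde W}$, which you establish edge by edge using that $\det D$ is affine in each $W_e$ with nonnegative coefficients (matrix--tree). The iteration is valid because at every intermediate step all weights remain strictly positive, so each intermediate $D$ is positive definite.

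What each approach buys: yours is more elementary and self-contained---no Green's function $\G$, no bound on $\det(\Id-M\G)$, just the normalization of the SUSY measure and a matrix--tree monotonicity. The paper's approach, by contrast, is the one that generalizes: the whole point of Section~\ref{sec:bounds-onedimensionaleff} is to handle edges with $m_e>0$ but $W_e=0$ (or $m_e\ge W_e$) via cutoffs $\chi_e$, and there your weight-shift trick breaks down immediately since $\tilde W_e\le 0$ and $\mu_{\tilde W}$ ceases to be a probability measure. The Ward identity with the full $\det(\Id-M\G)$ is exactly what survives this obstruction, which is why the paper develops that machinery.
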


The expectation of $\prod_{e\in E'\cup E''}B_e^{m_e}$ in \eqref{eq:main-thm-bound1}
is estimated using a partition of
unity. For some summands in this partition we use an iteration of the bound
$B_{ij}\leq 2 B_{ik} B_{kj}.$ 
After this additional expansion
any summand in the partition ``protects'' some factors $B_{ij}^{m}$ with a
cut-off function $\chi_{ij}$ and does not protect other factors.
Factors with the complement of the protection   $1-\chi_{ij}$
are estimated using Chebychev's inequality. The expectation of the remaining product
is estimated by supersymmetric Ward identities and bounds on determinants which
rely on the protections.

We need some notation. 
The signed incidence matrix of the graph $G$ with the wiring point $\rho$ removed
but including the pinning edges is given by
\begin{align}
F\in\{-1,0,1\}^{\Lambda\times E}, \quad F_{k,e}=1_{\{ e_+=k\}}-1_{\{ e_-=k\}} 
  \quad\text{for } k\in \Lambda,\, e\in E. 
\end{align}
Using this notation, we can represent the weighted Laplacian $D$ 
from
\eqref{eq:representation-D-with-pinning-old} as 
\begin{align}
  D=F\W F^t\in\R^{\Lambda\times \Lambda}
\end{align}
with the diagonal matrix 
\begin{align}
  \W=\W(u):=\diag(\W_e,e\in E)\in\R^{ E\times E}
  \quad \text{with}\quad\W_e:=W_ee^{u_{e_+}+u_{e_-}}.
\end{align}
For $m_e\ge 0$, $e\in E$, we define the diagonal matrices
\begin{align}
  \label{eq:def-M-G}
  &M:=\diag(m_e,e\in E), \\
  &Q=Q^G(u,s):=\diag(Q_e,e\in E) \quad\text{with} \quad Q_e:=\frac{e^{u_{e_+}+u_{e_-}}}{B_e(u,s)}. 
\end{align}

In \cite{disertori-merkl-rolles2023}, the bound \eqref{eq:bound-prod-Be-without-protection}
was proved using a Ward identity which involves the determinant of $\Id-\sqrt{M}\G \sqrt{M}$, where 
\begin{align}
   &\G=\G^G_W(u,s):=\sqrt{Q}F^tD^{-1}F\sqrt{Q}\in\R^{ E\times E}.
    \label{eq:def-G}
\end{align}
Unfortunately,
the symmetric matrix $\Id-\sqrt{M}\G \sqrt{M}$ might not be positive definite
when $m_e>0$ for some edge $e\notin E_+$.
This is an obstruction to extend \eqref{eq:bound-prod-Be-without-protection}
to cases where $m_e>0$ but $W_e=0$ can occur.
We can still obtain an analogue of \eqref{eq:bound-prod-Be-without-protection}
if we restrict the expectation to a smaller set of configurations by
introducing a family $(\chi_e)_{e\in E}$ of protection functions.
For $e\in E$, let $\delta_e\in(0,\infty]$ and 
\begin{align}
\chi_e:\R\to\R,\quad\chi_e=1_{(-\infty,1+\delta_e)}  
\end{align}
be the indicator function
of the interval $(-\infty,1+\delta_e)$.
Let $\chi_e^n:\R\to[0,1]$, $n\in\N$, be
smooth approximations of $\chi_e$ such that $\chi_e^n(b)=1$ for $b\le 1$,
$\chi_e^n(b)=0$ for $b\ge 1+\delta_e$, $b\mapsto\chi_e^n(b)$ is 
decreasing, and $\chi_e^n\uparrow\chi_e$ as $n\uparrow\infty$. The notation
$\chi_e^n$ should not be confused with a power of $\chi_e$. 
Note that for $\delta_e=\infty$,
one has $\chi_e\equiv 1$. In this case, we choose $\chi_e^n\equiv 1$ for
all $n$. 

As a replacement for the hypothesis $0\le m_e<W_e$ of bound
\eqref{eq:bound-prod-Be-without-protection}, we use the following
assumption, which needs to be checked in a model-dependent way
in concrete cases:
\begin{assumption}
  \label{ass:m-W}
  We assume that the choice of the parameters $\delta_e$, $e\in E$, depending
  on the weights $W_e$ and the exponents $m_e$, $e\in E$, is such that 
  $D(u)>FM Q(u,s)F^t$ holds on the set of configurations 
\begin{align}
\label{eq:def-U}
U:=\{(u,s)\in\R^\Lambda\times\R^\Lambda:B_e(u,s)< 1+\delta_e\text{ for all }e\in E\}.
\end{align}
\end{assumption}
Note that $1_U=\prod_{e\in E}\chi_e(B_e)$. 

The following lemma is a crucial ingredient in the proof of Theorem \ref{thm:one-dim}.
We do not have a universal
analogue of bound \eqref{eq:bound-prod-Be-without-protection} since
bounds on the determinant $\det(\Id-M\G)$ need to be derived for the individual
  models.

\begin{lemma}
  \label{le:upper-bound-prod-B-chi-det}
  Under the Assumption \ref{ass:m-W} one has
\begin{align}
 1\ge \E\left[\prod_{e\in E}B_e^{m_e}\cdot\prod_{e\in E}\chi_e(B_e)\cdot\det(\Id-M\G)\right]. 
\end{align}
\end{lemma}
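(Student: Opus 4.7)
The plan is to prove the bound by smoothly approximating the indicators and then rewriting the expectation as a super-integral via explicit Grassmann integration. I replace each $\chi_e$ by its smooth approximation $\chi_e^n$, prove the bound for every $n$, and finally pass to the limit $n\to\infty$ by monotone convergence; this limit is justified because on $U$ the determinant $\det(\Id-M\G)$ is non-negative (it equals $\det\widetilde D/\det D$ for the effective Laplacian $\widetilde D:=D-FMQF^t$, which is positive definite on $U$ by Assumption~\ref{ass:m-W}), while outside $U$ the product $\prod_e\chi_e$ vanishes.

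The key identity is a super-rewriting. Setting $X_e:=(\overline\psi_{e_+}-\overline\psi_{e_-})(\psi_{e_+}-\psi_{e_-})$ and using nilpotency $X_e^2=0$, one has $S_e=B_e+E_eX_e$ with $E_e=e^{u_{e_+}+u_{e_-}}$, and therefore $S_e^{m_e}=B_e^{m_e}(1+m_eQ_eX_e)$. Multiplying the product $\prod_e(1+m_eQ_eX_e)$ by the Grassmann exponential $\prod_e(1-W_eE_eX_e)$ coming from the measure collapses (again by $X_e^2=0$) to $\prod_e(1-\widetilde W_eE_eX_e)$, where $\widetilde W_e=W_e-m_e/B_e$. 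The resulting Grassmann Gaussian integral equals $\det\widetilde D$ by the Matrix-Tree theorem, and the standard identity $\det\widetilde D=\det D\cdot\det(\Id-M\G)$ yields
\[
\sk{\prod_e\chi_e^n(B_e)\prod_e S_e^{m_e}}=\E\Bigl[\prod_e B_e^{m_e}\prod_e\chi_e^n(B_e)\det(\Id-M\G)\Bigr].
\]

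The remaining task is to show the super-expectation on the left is $\le 1$. Here I would invoke a supersymmetric Ward-identity argument: from the Zirnbauer normalization $\sk{1}=1$ applied to the shifted weights $W_e+\lambda_e$, one deduces $\sk{(S_e-1)^{k_e}\cdots}=0$ for all nontrivial multi-indices $(k_e)$; expanding $\prod_e S_e^{m_e}=\prod_e(1+(S_e-1))^{m_e}$ then gives $\sk{\prod_e S_e^{m_e}}=1$. Inserting the cutoff $\prod_e\chi_e^n$, which is bounded by $1$ and supported on the set $U$ on which Assumption~\ref{ass:m-W} ensures $\det\widetilde D\ge 0$, one obtains the inequality $\sk{\prod_e\chi_e^n(B_e)\prod_e S_e^{m_e}}\le\sk{\prod_e S_e^{m_e}}=1$.

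The hard part of this plan is making the last comparison rigorous. The product $\prod_e S_e^{m_e}$ is not a non-negative function of the superintegration variables (it carries Grassmann content), so the step $\sk{\chi\cdot f}\le\sk{f}$ is not a tautology. The resolution is to interpolate $m_e\mapsto tm_e$ and differentiate, and crucially to use Assumption~\ref{ass:m-W}: on the support of the cutoffs the effective Laplacian $\widetilde D$ is positive definite, so the super-integrand can be packaged as a genuine non-negative density after Grassmann integration, allowing the Ward identity to deliver a definite sign. Combining this with $\sk{\prod_e\chi_e^n(B_e)}=\E[\prod_e\chi_e^n(B_e)]\le 1$ and letting $n\to\infty$ completes the argument.
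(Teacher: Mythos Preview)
Your identity
\[
\sk{\prod_e\chi_e^n(B_e)\prod_e S_e^{m_e}}=\E\Bigl[\prod_e B_e^{m_e}\prod_e\chi_e^n(B_e)\det(\Id-M\G)\Bigr]
\]
is correct, and so is the Ward identity $\sk{\prod_e S_e^{m_e}}=1$. The gap is exactly where you flag it: the step $\sk{\chi\cdot\prod_e S_e^{m_e}}\le\sk{\prod_e S_e^{m_e}}$. After Grassmann integration the right-hand side becomes $\E[\prod_e B_e^{m_e}\det(\Id-M\G)]=1$, and you want to bound $\E[\chi\cdot\prod_e B_e^{m_e}\det(\Id-M\G)]$ by it. But Assumption~\ref{ass:m-W} gives $\det(\Id-M\G)>0$ only on $U$; on $U^c$ it may well be negative. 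Since $\chi$ vanishes on $U^c$, the comparison reduces to asking that $\int_{U^c}\prod_e B_e^{m_e}\det(\Id-M\G)\,d\mu\ge 0$, which you have no control over. Your proposed interpolation $m_e\mapsto tm_e$ does not help: the sign problem on $U^c$ persists for every $t>0$, and there is no evident monotonicity in $t$ of the cut-off super-expectation.

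The paper closes this gap by a different placement of the cutoff: it applies the Ward identity to $\prod_e S_e^{m_e}\chi_e^n(S_e)$, i.e.\ with the \emph{supersymmetric} argument $S_e$ inside $\chi_e^n$, so that $\sk{\prod_e S_e^{m_e}\chi_e^n(S_e)}=1$ exactly, with no inequality to justify. Expanding $\chi_e^n(S_e)=\chi_e^n(B_e)\exp(\overline\psi^tF_{\cdot e}R_eF_{e\cdot}^t\psi)$ with $R_e=e^{u_{e_+}+u_{e_-}}(\chi_e^n)'(B_e)/\chi_e^n(B_e)$ produces, after Grassmann integration, $\det(D-F(MQ+R)F^t)/\det D$. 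The crucial point is that $\chi_e^n$ is chosen \emph{decreasing}, so $R_e\le 0$ and $-FRF^t\ge 0$; on the support of the cutoffs (contained in $U$) one then gets $\det(D-F(MQ+R)F^t)\ge\det(D-FMQF^t)=\det D\cdot\det(\Id-M\G)$, which yields the claimed inequality directly. The missing idea in your plan is precisely this: feed $S_e$, not $B_e$, into the cutoff so that supersymmetry is preserved, and use the monotonicity of $\chi_e^n$ to make the extra Grassmann contribution go the right way.
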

\begin{proof}
  The proof follows the lines of the proof of \cite[Lemma~4.2]{disertori-merkl-rolles2023}
  with some additional complications due to the presence of the cutoff
  functions. We apply the  Ward identity from  \cite[Lemma~4.1]{disertori-merkl-rolles2023}
  with $f_e(x)=x^{m_e}\chi_e^n(x)$. Using $\chi_e^n(1)=1$, we obtain for all $n\in\N$:
  \begin{align}
  \label{eq:ward-id-with-protection}
    \sk{\prod_{e\in E}(S_e^{m_e}\chi_e^n(S_e))}=1. 
  \end{align}
  Note that the Grassmann derivatives contained in the expectation involve
  derivatives of the cutoff functions. This is why we need to replace
  the indicator functions $\chi_e$ by smooth approximations $\chi_e^n$. 
  At the end we will take the limit $n\to\infty$. Consider an edge $e=\{i,j\}\in E$. 
  By \eqref{eq:def-Sij-Bij} and using a Taylor expansion for $S_e^{m_e}$, one has
  \begin{align}
   S_e=&B_e+(\overline\psi_i-\overline\psi_j)(\psi_i-\psi_j)e^{u_i+u_j}
         =B_e+e^{u_i+u_j}\overline\psi^tF_{\cdot e}F^t_{e\cdot}\psi, \\
    S_e^{m_e}=&\left(B_e+e^{u_i+u_j}\overline\psi^tF_{\cdot e}F^t_{e\cdot}\psi\right)^{m_e}
                =B_e^{m_e}\exp\left(\overline\psi^tF_{\cdot e}m_eQ_eF^t_{e\cdot}\psi \right);
                \label{eq:taylor-exp-S-e-hoch-m-e}
  \end{align}
  cf.\ \cite[Equation (4.20)]{disertori-merkl-rolles2023}. 
  We expand $\chi_e^n(S_e)$ in the Grassmann variables:
\begin{align}
\chi_e^n(S_e)=& \chi_e^n(B_e)+(\chi_e^n)'(B_e)e^{u_i+u_j}\overline\psi^tF_{\cdot e}F^t_{e\cdot}\psi
\end{align}
Since $\chi_e^n\ge 0$, all $b$ with $\chi_e^n(b)=0$ are local
minima and hence satisfy $(\chi_e^n)'(b)=0$. Hence $\chi_e^n(B_e)=0$ implies
$\chi_e^n(S_e)=0$. Thus
$ S_e^{m_e}\chi_e^n(S_e)= S_e^{m_e}\chi_e^n(S_e)1_{\{\chi_e^n(B_e)>0\}}$. 
On the event $\{\chi_e^n(B_e)>0\}$, using
$(\overline\psi^tF_{\cdot e}F^t_{e\cdot}\psi)^2=0$, we obtain
\begin{align}
  \chi_e^n(S_e) =&\chi_e^n(B_e)\left(1+\overline\psi^tF_{\cdot e}R_eF^t_{e\cdot}\psi\right)
                   =\chi_e^n(B_e)\exp(\overline\psi^tF_{\cdot e}R_eF^t_{e\cdot}\psi)
\end{align}
with $R_e:=e^{u_i+u_j}(\chi_e^n)'(B_e)/\chi_e^n(B_e)$.
Combining this with \eqref{eq:taylor-exp-S-e-hoch-m-e} yields on the same event  
\begin{align}
  S_e^{m_e}\chi_e^n(S_e)
    = &B_e^{m_e}\chi_e^n(B_e)
        \exp\left(\overline\psi^tF_{\cdot e}[m_eQ_e+R_e]F^t_{e\cdot}\psi\right). 
\end{align}
The following calculations are performed on the event
$U_n:=\{\chi_e^n(B_e)>0\text{ for all }e\in E\}$. Note that
$U_n\uparrow U$ as $n\to\infty$. 
Let $R:=\diag(R_e,e\in E)$. 
One has 
\begin{align}
  \prod_{e\in E}(S_e^{m_e}\chi_e^n(S_e))
  =&\prod_{e\in E}(B_e^{m_e}\chi_e^n(B_e))\exp(-\overline\psi^t(-F(M Q+R)F^t)\psi).
\end{align}
Since $\chi_e^n$ is non-negative and decreasing, $(\chi_e^n)'/\chi_e^n\le 0$.
Consequently, 
\begin{align}
  -a^tFRF^ta=-\sum_{e=\{i,j\}\in E}e^{u_i+u_j}\frac{(\chi_e^n)'(B_e)}{\chi_e^n(B_e)}
  (a_i-a_j)^2\ge 0
\end{align}
for all $a\in\R^\Lambda$ and hence the matrix $-FRF^t$ is positive semidefinite. 

The Grassmann part in the Boltzmann factor $e^{-A}$ is given by $\exp(-\overline\psi^tD\psi)$. 
Hence, using that the product $\prod_{e\in E}\chi_e^n(B_e)$ vanishes on $U_n^c$ and
the value of the indicator function $1_{U_n}$ depends only
on the real variables $u$ and $s$, but not on the Grassmann variables
$\overline\psi$ and $\psi$, we can rewrite \eqref{eq:ward-id-with-protection} as follows  
\begin{align}
  1&=\sk{\prod_{e\in E}(B_e^{m_e}\chi_e^n(B_e))e^{-\overline\psi^t(-F(M Q+R)F^t)\psi}}
      \nonumber\\
  &=\int\limits_{\R^\Lambda\times\R^\Lambda}
     \prod_{e\in E}(B_e^{m_e}\chi_e^n(B_e))
     \prod_{i\in \Lambda}\partial_{\overline\psi_i}\partial_{\psi_i}
     e^{-\overline\psi^t(D-F(M Q+R)F^t)\psi}
  \, \frac{\mu(du\, ds)}{\det D}\nonumber\\
  &=\E\left[\prod_{e\in E}(B_e^{m_e}\chi_e^n(B_e))\frac{\det(D-F(M Q+R)F^t)}{\det D}\right].
\end{align}
Since $U_n\subseteq U$, $D-FM QF^t>0$ holds on $U_n$ by Assumption \ref{ass:m-W}.
Together with $-FRF^t\ge 0$ this yields
\begin{align}
  \label{eq:rel-det-id-min}
  &\frac{\det(D-F(M Q+R)F^t)}{\det D}
    \ge\frac{\det(D-FM QF^t)}{\det D}
    =\det(\Id-D^{-1}FM QF^t)\nonumber\\
  =&\det(\Id-M \sqrt{Q}F^tD^{-1}F\sqrt{Q}) 
    =\det(\Id-M\G)>0,
\end{align}
where we rewrote the product of diagonal matrices $M Q=\sqrt{Q}M \sqrt{Q}$
and used the fact $\det(\Id+AB)=\det(\Id+BA)$, which is valid for any
rectangular matrices $A\in\R^{m\times n}$, $B\in\R^{n\times m}$. We conclude
\begin{align}
  1\ge &\E_W\left[\prod_{e\in E}(B_e^{m_e}\chi_e^n(B_e))\det(\Id-M\G)\right]
     \label{eq:ward-det-explicit}
\end{align}
for all $n$. Taking the limit $n\to\infty$ the result follows by monotone
convergence. 
\end{proof}

\subsection{Bounds on determinants}
\label{sec:bounds-on-dets}

The following lemma is useful to verify Assumption \ref{ass:m-W}. 

\begin{lemma}\label{le:equivalence-of-hypotheses}\hspace{2cm}
  
\begin{enumerate}
\item The following two statements are equivalent:
\begin{align}
  \sqrt{M}\G \sqrt{M}<\Id
  \qquad\Leftrightarrow\qquad
  D>FM QF^t.
           \label{eq:D-larger-FQF}
\end{align}
\item If $W_e\ge m_e$ for all $e\in E$ and the graph $(\Lambda\cup\{\rho\},\{e\in E:W_e>m_e\})$
  is connected, then the two equivalent conditions
  in \eqref{eq:D-larger-FQF} hold.
\end{enumerate}
\end{lemma}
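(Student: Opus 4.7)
For part 1, my approach is to reduce the equivalence to the standard linear-algebra fact that, for any rectangular matrix $K$, the symmetric matrices $KK^t$ and $K^tK$ share the same nonzero spectrum; in particular, $\Id-KK^t>0$ iff $\Id-K^tK>0$. Since $G_+$ is connected, $D$ is positive definite, so the inequality $D>FMQF^t$ is equivalent, after conjugation by $D^{-1/2}$, to $\Id>D^{-1/2}FMQF^tD^{-1/2}$. Because $M$ and $Q$ are diagonal and thus commute, $FMQF^t=F\sqrt{MQ}(F\sqrt{MQ})^t$. Setting $K:=D^{-1/2}F\sqrt{MQ}$ rewrites this as $\Id>KK^t$, which by the remark above is equivalent to
\begin{align*}
\Id>K^tK=\sqrt{MQ}F^tD^{-1}F\sqrt{MQ}=\sqrt{M}\G\sqrt{M},
\end{align*}
using once more the commutativity of the diagonal matrices $M$ and $Q$.

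For part 2 I would verify the second form of the inequality directly. Writing $D=F\W F^t$, the difference equals $D-FMQF^t=F(\W-MQ)F^t$, where $\W-MQ$ is the diagonal matrix with $e$-th entry
\begin{align*}
\W_e-M_eQ_e=e^{u_{e_+}+u_{e_-}}\left(W_e-\frac{m_e}{B_e}\right).
\end{align*}
The elementary bound $B_e\ge 1$, immediate from $B_e=\cosh(u_{e_+}-u_{e_-})+\frac12(s_{e_+}-s_{e_-})^2e^{u_{e_+}+u_{e_-}}$, combined with the hypothesis $W_e\ge m_e$, shows that every such entry is non-negative, and strictly positive whenever $W_e>m_e$. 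Consequently, $F(\W-MQ)F^t$ is the weighted graph Laplacian on $\Lambda\cup\{\rho\}$ with $\rho$'s row and column removed (the pinning bookkeeping built into the definition of $F$), whose positively weighted edge set contains $\{e\in E:W_e>m_e\}$. By the standard characterization of pinned weighted Laplacians, positive definiteness follows as soon as every vertex of $\Lambda$ is connected to $\rho$ through positively weighted edges, which is exactly the connectedness hypothesis.

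Neither step poses a serious obstacle: part 1 is a three-line linear algebra manipulation, and part 2 is an elementary computation plus the standard graph-theoretic positive-definiteness criterion. The only subtlety worth mentioning is keeping track of the convention that $F$ has rows indexed only by $\Lambda$, so that $F(\W-MQ)F^t$ automatically corresponds to the Laplacian with the pinning vertex $\rho$ removed; the correct criterion for strict positive definiteness is then connectedness to $\rho$ (not merely connectedness of the subgraph on $\Lambda$), which matches the hypothesis as stated.
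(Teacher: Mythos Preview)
Your proposal is correct and follows essentially the same approach as the paper. For part~1 you use the same rectangular matrix $K=D^{-1/2}F\sqrt{MQ}$ (the paper calls it $A$) and the same spectral fact relating $KK^t$ and $K^tK$; for part~2 you carry out the same diagonal computation $\W_e-m_eQ_e=e^{u_{e_+}+u_{e_-}}(W_e-m_e/B_e)\ge 0$ and then invoke the standard positive-definiteness criterion for the pinned Laplacian, which the paper verifies by the equivalent explicit quadratic-form argument.
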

\begin{proof}
For the first part, we use the following fact from linear algebra: for any possibly
    rectangular real-valued matrix $A$ the statements $AA^t<\Id$ and $A^tA<\Id$
    are equivalent. This follows from the fact that the sets of non-zero eigenvalues
    of $AA^t$ and of $A^tA$ coincide. We apply this fact to $A=D^{-1/2}F\sqrt{M}\sqrt{Q}$.
    Note that $\sqrt{M}\sqrt{Q}=\sqrt{Q}\sqrt{M}$ because $M$ and $Q$ are diagonal matrices. 
    We calculate
    \begin{align}
    A^tA=\sqrt{M}\sqrt{Q}F^tD^{-1}F\sqrt{Q}\sqrt{M}=\sqrt{M}\G \sqrt{M};  
    \end{align}
    thus $A^tA<\Id$ is equivalent to the first statement in \eqref{eq:D-larger-FQF}. On the other
    hand, using the equality 
    \begin{align}
    AA^t=D^{-1/2}FMQF^tD^{-1/2}
    \end{align}
    and $D^t=D$, we obtain that $AA^t<\Id$ is equivalent to
    $D^{-1/2}(D-FMQF^t)D^{-1/2}>0$, which is equivalent to the second
    statement in \eqref{eq:D-larger-FQF}.    

    For the second part, assume that $W_e\ge m_e$ for all $e\in E$. Then it follows that 
    $W_e\ge m_e/B_e$ because $B_e\ge 1$ and
consequently
\begin{align}
  \W_e=W_ee^{u_{e_+}+u_{e_-}}\ge \frac{m_e}{B_e} e^{u_{e_+}+u_{e_-}}=m_eQ_e.
\end{align}
Moreover, the above inequality is strict if $W_e>m_e$. We assume now that the graph
$\tilde G:=(\Lambda\cup\{\rho\},\{e\in E:W_e>m_e\})$ is connected.
Let $x\in\R^\Lambda\setminus\{0\}$ and set $x_\rho=0$. Then, there is $i\in \Lambda$ with $x_i\neq 0$. Since
$i$ and $\rho$ are connected by $\tilde G$ there is an edge $e\in E$ with $W_e>m_e$ and
$x_{e_+}\neq x_{e_-}$, i.e.\ $(F^tx)_e\neq 0$. We conclude
\begin{align}
  0<(\W_e-m_eQ_e)(F^tx)_e^2\le x^tF(\W-MQ)F^tx.
\end{align}
Hence, $D=F\W F^t>FM QF^t$.
\end{proof}

The next result shows some monotonicity of $\det(\Id-M\G)$ with respect to $W_e$ and
$m_e$ and will be used to compare $\det(\Id-M\G)$ for $G$ and the
graph obtained from $G$ by removing some edges.

\begin{lemma}[Monotonicity of the determinant in the parameters]
  \label{le:monotonicity}
    Let $(m_e)_{e\in E}$, $(W_e)_{e\in E}\in[0,\infty)^E$ and
  $(m_e')_{e\in E},(W_e')_{e\in E}\in[0,\infty)^E$ be families of parameters
  such that  $m_e\le m_e'$ and
    $W_e\ge W_e'$ for all $e\in E$. Assume
    that the second one (the primed one)
    satisfies Assumption \ref{ass:m-W}. Then, the first one
    satisfies it as well.

    Moreover, let the matrices $M,\G$ and $M',\G'$ be defined according to
    \eqref{eq:def-M-G} and \eqref{eq:def-G}
    with the first and second family of parameters, respectively. Then, one has on the
    event $U$
    \begin{align}
      0<\det(\Id-M'\G')\le\det(\Id-M\G).
    \end{align}
\end{lemma}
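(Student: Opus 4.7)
First I would verify Assumption~\ref{ass:m-W} for the unprimed family. Since $Q=Q^G(u,s)$ depends only on $(u,s)$ and not on the parameters $W_e$ or $m_e$, it is common to both families. The hypotheses $m_e\le m_e'$ and $W_e\ge W_e'$ translate into $FMQF^t\le FM'QF^t$ and $D\ge D'$ in the positive semidefinite order, so on $U$
\begin{align*}
D - FMQF^t \;\ge\; D' - FM'QF^t \;>\; 0,
\end{align*}
the strict inequality being the primed form of Assumption~\ref{ass:m-W}. In particular, both $\det(\Id-M\G)$ and $\det(\Id-M'\G')$ are strictly positive on $U$ by \eqref{eq:rel-det-id-min} and Lemma~\ref{le:equivalence-of-hypotheses}, so the positivity part of the claim is immediate and only the inequality remains.

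For the determinant comparison I would use the integrated logarithmic derivative
\begin{align*}
-\log\det(\Id - M\G)
\;=\; \log\det D - \log\det(D - FMQF^t)
\;=\; \int_0^1 \mathrm{tr}\!\left[(D - t\,FMQF^t)^{-1}\, FMQF^t\right] dt,
\end{align*}
obtained from $\tfrac{d}{dt}\log\det(D - tFMQF^t) = -\mathrm{tr}[(D - tFMQF^t)^{-1}FMQF^t]$. This is well-defined on $U$ because $D - tFMQF^t \ge D - FMQF^t > 0$ for all $t\in[0,1]$, and an analogous formula holds for the primed data.

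To finish, I would show the integrand for the unprimed parameters is pointwise in $t$ bounded above by the integrand for the primed parameters. For each $t\in[0,1]$ the ordering $D - tFMQF^t \ge D' - tFM'QF^t > 0$ (again from $m_e\le m_e'$ and $W_e\ge W_e'$) together with operator monotonicity of matrix inversion gives $(D - tFMQF^t)^{-1}\le (D' - tFM'QF^t)^{-1}$. Combined with $0\le FMQF^t\le FM'QF^t$ and the elementary inequality $\mathrm{tr}(AX)\le\mathrm{tr}(BY)$ for $0\le A\le B$ and $0\le X\le Y$ PSD (a one-liner from $\mathrm{tr}(BY)-\mathrm{tr}(AX) = \mathrm{tr}(B(Y-X))+\mathrm{tr}((B-A)X)\ge 0$, using non-negativity of traces of products of positive semidefinite matrices), this pointwise bound follows. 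Integration and exponentiation then yield $\det(\Id-M'\G')\le\det(\Id-M\G)$ on $U$. I expect no real obstacle: every step reduces to elementary PSD matrix calculus once the integral representation above is written down, and the propagation of Assumption~\ref{ass:m-W} is a one-line computation.
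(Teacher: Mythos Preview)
Your proof is correct but takes a different route from the paper. The paper argues purely algebraically: it symmetrizes $M\G$ in two ways, as $\sqrt{\G}M\sqrt{\G}$ and as $\sqrt{M}\G\sqrt{M}$, notes that all three share the same nonzero spectrum, and then chains the inequalities
\[
\det(\Id-\sqrt{\G}M\sqrt{\G})\ge\det(\Id-\sqrt{\G}M'\sqrt{\G})=\det(\Id-\sqrt{M'}\G\sqrt{M'})\ge\det(\Id-\sqrt{M'}\G'\sqrt{M'}),
\]
using $M\le M'$ for the first step, $\G\le\G'$ (from $D\ge D'>0$ and operator monotonicity of inversion) for the last, and the monotonicity of $\det$ on $\{A:0\le A<\Id\}$. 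Your approach instead passes through the identity $\det(\Id-M\G)=\det(D-FMQF^t)/\det D$, writes $-\log$ of this as an integral of a trace along the segment $t\mapsto D-tFMQF^t$, and compares integrands pointwise via the trace inequality $\mathrm{tr}(AX)\le\mathrm{tr}(BY)$ for $0\le A\le B$, $0\le X\le Y$. Both arguments rest on the same two order relations ($D\ge D'$ and $FMQF^t\le FM'QF^t$), and both ultimately encode log-concavity of the determinant; the paper's version is calculus-free and reuses the symmetrization trick that appears elsewhere (e.g.\ in Lemma~\ref{le:collapse}), while yours avoids the two-step symmetrization and works directly with the Laplacian-level matrices, which some readers may find more transparent.
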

\begin{proof}
Set $\W_e':=W_e'e^{u_{e_+}+u_{e_-}}$, $\W':=\diag(\W_e',e\in E)$, and $D':=F\W' F^t$.
  By assumption, we know $D'=F\W' F^t>FM'QF^t$ on the event $U$. The assumptions
  $m_e\le m_e'$ and $W_e\ge W_e'$, $e\in E$, imply $D=F\W F^t\ge F\W' F^t=D'>FM'QF^t\ge FMQF^t$,
  hence $D>FMQF^t$ and $D>FM'QF^t$ hold on the event $U$ as well
  and thus Assumption~\ref{ass:m-W} is satisfied.
  
  For the second claim, on the event $U$,
    we observe that the last two inequalities
    and our assumption imply $\sqrt M\G\sqrt M<\Id$, $\sqrt{M'}\G' \sqrt{M'}<\Id$, 
    and $\sqrt{M'}\G\sqrt{M'}<\Id$ by Lemma~\ref{le:equivalence-of-hypotheses}. 
    The last inequality and $M\le M'$ yield
    $0\le\sqrt{\G}M\sqrt{\G}\le\sqrt{\G}M'\sqrt{\G}<\Id$, cf.\
    first two lines in the
    proof of Lemma~\ref{le:equivalence-of-hypotheses}. 
    Since $D\ge D'$, we infer
  \begin{align}
    \G=\sqrt{Q}F^tD^{-1}F\sqrt{Q}\le \sqrt{Q}F^t(D')^{-1}F\sqrt{Q}=\G',
  \end{align}
  which implies $0\le\sqrt{M'}\G\sqrt{M'}\le\sqrt{M'}\G'\sqrt{M'}<\Id$.
  We conclude
\begin{align}
&  \det(\Id-M\G)=\det(\Id-\sqrt{\G}M\sqrt{\G})
                  \ge \det(\Id-\sqrt{\G}M'\sqrt{\G}) \nonumber\\
  &=\det(\Id-\sqrt{M'}\G\sqrt{M'})\ge\det(\Id-\sqrt{M'}\G' \sqrt{M'})=\det(\Id-M'\G')>0
\end{align}
because the determinant is monotonically increasing on the set of positive definite
matrices. 
\end{proof}

We now compare the determinants associated to two graphs $G$ and $G'$,
where $G'$ is obtained from $G$ by splitting some vertices and edges in several ones.
In other words, $G$ is obtained from $G'$ by identifying vertices according
to some map $k$. 

\begin{lemma}[Separating vertices]
  \label{le:collapse}
  Let $G=(\Lambda\cup\{\rho\},E)$ and $G'=(\Lambda'\cup\{\rho\},E')$ be
  two complete graphs, endowed with families of weights $W_e$, $W_{e'}'\ge0$,
  $e\in E$, $e'\in E'$, such that $G_+$ and $G_+'$ are connected, cf.\
  \eqref{eq:def-G-plus}. 
  We take parameters $m_e$, $m_{e'}'\ge 0$, $e\in E$, $e'\in E'$, and 
  we assume that there
  exists a surjective map $k:\Lambda'\cup\{\rho\}\to\Lambda\cup\{\rho\}$
  such that $k^{-1}(\{\rho\})=\{\rho\}$ and 
  \begin{align}
    W_e=\sum_{e'\in E'}W_{e'}'1_{\{k(e')=e\}},\quad
    m_e=\sum_{e'\in E'}m_{e'}'1_{\{k(e')=e\}},
    \label{eq:assumption-on-W-m-and-prime}
  \end{align}
  where for $e'=\{i,j\}$ we set $k(e'):=\{k(i),k(j)\}$. 
  Given $u_i$ and $s_i$ for $i\in \Lambda\cup\{\rho\}$, define
  $u_i':=u_{k(i)}$, $s_i':=s_{k(i)}$ for $i\in\Lambda'\cup\{\rho\}$. Let
  $\G(u,s),\G'(u',s'),M$, and
  $M'$ be defined as in \eqref{eq:def-G} and \eqref{eq:def-M-G}. 
  If $\sqrt{M'}\G'(u',s') \sqrt{M'}\le(1-\theta)\Id$ for a given
  $\theta\in(0,1)$, then $\sqrt{M}\G(u,s) \sqrt{M}\le(1-\theta)\Id$ and
  one has 
  \begin{align}
    \label{eq:claim-gluing}
    \det(\Id-M\G(u,s))\ge\det(\Id-M'\G'(u',s')).
  \end{align} 
\end{lemma}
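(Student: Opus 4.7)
The plan is to realize $B:=D^{-1/2}FMQF^tD^{-1/2}$ and $B':=D'^{-1/2}F'M'Q'(F')^tD'^{-1/2}$ as being related by an isometric embedding $\tilde K$, namely $B=\tilde K^tB'\tilde K$ with $\tilde K^t\tilde K=\Id_\Lambda$; the first conclusion of the lemma is then immediate, and \eqref{eq:claim-gluing} follows from a Schur complement argument. To this end, I first define the pullback $K\in\{0,1\}^{\Lambda'\times\Lambda}$ by $K_{i,j}:=1_{\{k(i)=j\}}$, so that $(Kx)_i=x_{k(i)}$ for $x\in\R^\Lambda$ (with $x_\rho:=0$, which is consistent because $k^{-1}(\{\rho\})=\{\rho\}$). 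I would then verify the two matrix identities
\begin{align}
D=K^tD'K \qquad\text{and}\qquad FMQ(u,s)F^t=K^tF'M'Q'(u',s')(F')^tK
\end{align}
by comparing the associated quadratic forms: for $x\in\R^\Lambda$ and $y:=Kx$, one has $y_{e'_+}-y_{e'_-}=x_{k(e'_+)}-x_{k(e'_-)}$ and $u'_{e'_+}+u'_{e'_-}=u_{k(e'_+)}+u_{k(e'_-)}$, so collapsed edges $e'\in E'$ with $k(e'_+)=k(e'_-)$ contribute zero on both sides, while the remaining contributions resum via \eqref{eq:assumption-on-W-m-and-prime} exactly to the corresponding quadratic form on $\R^\Lambda$.

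Setting $\tilde K:=D'^{1/2}KD^{-1/2}$, the identity $D=K^tD'K$ becomes $\tilde K^t\tilde K=\Id_\Lambda$, i.e.\ the columns of $\tilde K$ form an orthonormal set in $\R^{\Lambda'}$, and the second identity yields $B=\tilde K^tB'\tilde K$. By the $A^tA\leftrightarrow AA^t$ equivalence used in the proof of Lemma~\ref{le:equivalence-of-hypotheses}, the hypothesis $\sqrt{M'}\G'\sqrt{M'}\le(1-\theta)\Id$ is equivalent to $B'\le(1-\theta)\Id_{\Lambda'}$ and the first conclusion of the lemma is equivalent to $B\le(1-\theta)\Id_\Lambda$; the latter is then immediate from
\begin{align}
B=\tilde K^tB'\tilde K\le(1-\theta)\tilde K^t\tilde K=(1-\theta)\Id_\Lambda.
\end{align}

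For \eqref{eq:claim-gluing}, I extend the columns of $\tilde K$ to a full orthonormal basis of $\R^{\Lambda'}$; in this basis, $B'$ takes the block form $\bigl(\begin{smallmatrix}B_{11}&B_{12}\\ B_{12}^t&B_{22}\end{smallmatrix}\bigr)$ with $B_{11}=\tilde K^tB'\tilde K=B$. Since $\Id-B'>0$, the Schur complement formula gives
\begin{align}
\det(\Id-B')=\det(\Id-B)\cdot\det(S),\quad S:=(\Id-B_{22})-B_{12}^t(\Id-B)^{-1}B_{12}.
\end{align}
The matrix $S$ is positive definite as the Schur complement of the positive definite matrix $\Id-B'$, and it satisfies $S\le\Id-B_{22}\le\Id$ in the positive semi-definite order (using $B_{22}\ge 0$ and $(\Id-B)^{-1}\ge 0$), so $\det(S)\in(0,1]$. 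Combining this with $\det(\Id-M\G)=\det(\Id-B)$ and $\det(\Id-M'\G')=\det(\Id-B')$ yields \eqref{eq:claim-gluing}.

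The main obstacle is the algebraic bookkeeping in verifying $D=K^tD'K$ and $FMQF^t=K^tF'M'Q'(F')^tK$, in particular keeping track of the pinning vertex $\rho$ (which enters the diagonal of $D$ via $h_ie^{u_i}=W_{i\rho}e^{u_i+u_\rho}$) and of the edges $e'\in E'$ with coinciding endpoints $k(e'_+)=k(e'_-)$ that are absent from $E$; once these identities are in place, the isometric-embedding/Schur-complement argument above is essentially automatic.
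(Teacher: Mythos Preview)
Your proof is correct and takes a genuinely different route from the paper's for the determinant inequality, though both start from the same auxiliary identities $D=K^tD'K$ and $FMQF^t=K^tF'M'Q'(F')^tK$ (which the paper isolates as a separate lemma and verifies entrywise, while you sketch the verification via quadratic forms). Your key observation $B=\tilde K^tB'\tilde K$ with $\tilde K^t\tilde K=\Id$ packages these identities as an \emph{isometric compression}: $B$ is a principal block of $B'$ in an orthonormal basis, so the first conclusion is immediate and the Schur complement factorization $\det(\Id-B')=\det(\Id-B)\det(S)$ with $0<S\le\Id$ gives \eqref{eq:claim-gluing} in one stroke. The paper instead introduces $C=D^{-1/2}F\sqrt{QM}$ and $L=D^{-1/2}KF'\sqrt{Q'M'}$ (in its transposed convention for $K$), notes $CC^t=LL^t$ so that $\det(\Id-\sqrt{M}\G\sqrt{M})=\det(\Id-L^tL)$, and then proves the operator inequality $L^tL\le\sqrt{M'}\G'\sqrt{M'}$ by comparing orthogonal projections onto $\range R^t$ and $\range(KR)^t$ for $R=F'\sqrt{\W'}$; the determinant inequality then follows from monotonicity of $\det$ on positive matrices. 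Your argument is arguably more elementary (no projection comparison needed) and more conceptual; the paper's approach yields as a by-product the slightly stronger intermediate bound $L^tL\le\sqrt{M'}\G'\sqrt{M'}$, which is not needed here but could be useful elsewhere.
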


The assumption $k^{-1}(\{\rho\})=\{\rho\}$
guarantees that the random variable $u_i'$ coincides with $u_\rho=0$ if
and only if $i=\rho$, and the same holds for $s_i'$ and $s_\rho=0$. 

\begin{proof}
  Throughout the proof, all matrices with prime
  belong to the weighted graph $(G',W')$ and are evaluated at $u',s'$.
  The claim \eqref{eq:claim-gluing} is equivalent to
\begin{align}
\label{eq:reduced-claim-gluing}
\det(\Id-\sqrt M\G\sqrt M)\ge\det(\Id-\sqrt{M'}\G'\sqrt{M'}).  
\end{align}
We define the matrix $K\in\{0,1\}^{\Lambda\times\Lambda'}$ by
$K_{jj'}=1$ iff $k(j')=j$. The proof uses the following two auxiliary identities 
\begin{align}
\label{eq:connection-prime-matrix1}
  KF'\W'(u')(F')^tK^t=& F\W(u)F^t, \\
  KF'M'Q'(u',s')(F')^tK^t=&FMQ(u,s)F^t,
\label{eq:connection-prime-matrix2}
\end{align}
which are shown in Lemma~\ref{le:aux} below. 
We introduce
\begin{align}
  C:=D^{-1/2}F\sqrt{QM}, \quad L:=D^{-1/2}KF'\sqrt{Q'M'}.
\end{align}
Note that $M$ and $Q$ are diagonal and hence commute. Using
\eqref{eq:connection-prime-matrix2} for the second equality, we calculate
\begin{align}
  \sqrt M\G\sqrt M=C^tC, \quad
  CC^t=LL^t.
\end{align}
It follows that $\sqrt M\G\sqrt M$, $LL^t$, and $L^tL$ all have the
same eigenvalues including multiplicities with the possible exception of 0, and 
\begin{align}
 \det(\Id-\sqrt M\G\sqrt M)=&\det(\Id-C^tC)=\det(\Id-CC^t)\nonumber\\
  =&\det(\Id-LL^t)=\det(\Id-L^tL).
     \label{eq:det-id-minus}
\end{align}
We will prove
\begin{align}
  \sqrt{M'}\G'\sqrt{M'}\ge L^tL.
  \label{eq:lower-bound-LL}
\end{align}
Since by assumption $\sqrt{M'}\G'\sqrt{M'}\le(1-\theta)\Id$, using
  the coincidence of eigenvalues $\neq 0$, this inequality implies
$L^tL\le(1-\theta)\Id$ and hence $\sqrt{M}\G\sqrt{M}\le(1-\theta)\Id$. The claim
\eqref{eq:claim-gluing} follows directly from \eqref{eq:det-id-minus}
and \eqref{eq:lower-bound-LL}.

To prove \eqref{eq:lower-bound-LL}, 
we calculate $L^tL=\sqrt{M'Q'}(KF')^tD^{-1}KF'\sqrt{Q'M'}$. Using
\eqref{eq:connection-prime-matrix1}, we find
$D=F\W F^t=KF'\W'(F')^tK^t=KD'K^t$ and thus
\begin{align}
  L^tL=\sqrt{M'Q'}(KF')^t(KD'K^t)^{-1}KF'\sqrt{Q'M'}.
\end{align}
The claim \eqref{eq:lower-bound-LL} is equivalent to
\begin{align}
  \sqrt{M'Q'}\left[(F')^t(D')^{-1}F'\right]\sqrt{Q'M'}
  \ge\sqrt{M'Q'}\left[(KF')^t(KD'K^t)^{-1}KF'\right]\sqrt{Q'M'}. 
\end{align}
To show the last inequality, it is sufficient to prove the following comparison
between the matrices in brackets:
\begin{align}
  \label{eq:FDF-lowerbound} 
  (F')^t(D')^{-1}F'\ge (KF')^t(KD'K^t)^{-1}KF'. 
\end{align}
We show first
\begin{align}
  \label{eq:FDF-lowerbound-modified}
\sqrt{\W'}(F')^t(D')^{-1}F'\sqrt{\W'}\ge \sqrt{\W'}(KF')^t(KD'K^t)^{-1}KF'\sqrt{\W'}. 
\end{align}
Let $R:=F'\sqrt{\W'}$, $\tilde R:=KR$. Then, $D'=RR^t$ is invertible and
$\tilde R\tilde R^t=KRR^tK^t$ holds.  
The operators $R^t(RR^t)^{-1}R$ and $\tilde R^t(\tilde R\tilde R^t)^{-1}\tilde R$
are orthogonal projections to $\range R^t$ and $\range\tilde R^t$, respectively.
Since $\tilde R^t=R^tK^t$, one has $\range \tilde R^t\subseteq\range R^t$ and
consequently, $R^t(RR^t)^{-1}R\ge \tilde R^t(\tilde R\tilde R^t)^{-1}\tilde R$. 
Inserting the definitions of $R$ and $\tilde R$, this yields
\eqref{eq:FDF-lowerbound-modified}.

If $\W'$ is invertible, the identity \eqref{eq:FDF-lowerbound} follows directly.
The set of weights $W'=(W_{e'}')_{e'\in E'}$ such that \eqref{eq:FDF-lowerbound}
holds is closed in the set of all
allowed weights. The set of all $W'$ such that $\W'$ is invertible is a
dense subset of the set of all allowed weights. Thus, \eqref{eq:FDF-lowerbound} follows.
\end{proof}

Next, we prove the auxiliary statements needed in the proof
of Lemma~\ref{le:collapse}. We use the same notation as in the lemma and
its proof.

\begin{lemma}[Auxiliary identities]
  \label{le:aux}
  Under the assumptions of Lemma~\ref{le:collapse}, the equations
  \eqref{eq:connection-prime-matrix1} and \eqref{eq:connection-prime-matrix2}
  hold. 
\end{lemma}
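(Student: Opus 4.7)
\textbf{Strategy.} Both identities have the form $K F' A' (F')^t K^t = F A F^t$ for suitable non-negative diagonal matrices $A' \in \R^{E' \times E'}$ and $A \in \R^{E \times E}$. Specifically, the first identity has $A' = \W'(u')$, $A = \W(u)$, while the second has $A' = M'Q'(u',s')$, $A = MQ(u,s)$. My plan is to prove the general statement
\[
(K F' A' (F')^t K^t)_{ij} = \sum_{e' \in E'} A'_{e'}\, \phi^{e'}_i \phi^{e'}_j \qquad (i,j \in \Lambda),
\]
where $\phi^{e'}_i := \sum_{i' \in \Lambda': k(i') = i} F'_{i', e'}$, and then verify each case by identifying $\phi^{e'}$ with $\pm F_{\cdot, k(e')}$.

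\textbf{The vector $\phi^{e'}$.} For any $e' = \{a,b\} \in E'$ with $a = e'_+$, $b = e'_-$ and any $i \in \Lambda$, the identity $\phi^{e'}_i = 1_{\{k(a) = i\}} - 1_{\{k(b) = i\}}$ holds. Indeed, even if one of $a,b$ equals $\rho$ (in which case that endpoint is excluded from the sum over $\Lambda'$), the hypothesis $k^{-1}(\{\rho\}) = \{\rho\}$ together with $i \neq \rho$ forces the corresponding indicator to vanish, so the formula is correct in all cases. Consequently, if $k(a) = k(b)$, then $\phi^{e'} = 0$, while if $k(a) \neq k(b)$, the edge $k(e') := \{k(a), k(b)\}$ lies in $E$ and $\phi^{e'}_i = \pm F_{i, k(e')}$ (the sign depending on whether $k$ preserves the direction of the edge). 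In either case, $\phi^{e'}_i \phi^{e'}_j = F_{i, k(e')} F_{j, k(e')} \cdot 1_{\{k(a) \neq k(b)\}}$.

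\textbf{Reorganizing by target edge.} Substituting into the expansion and summing first over $e' \in E'$ with fixed $k(e') = e$, I get
\[
(K F' A' (F')^t K^t)_{ij} = \sum_{e \in E} F_{i,e} F_{j,e} \sum_{\substack{e' \in E'\\ k(e') = e}} A'_{e'}.
\]
Now I specialize. Since $u'_{i'} = u_{k(i')}$, for any $e' \in E'$ with $k(e') = e$ one has $u'_{e'_+} + u'_{e'_-} = u_{e_+} + u_{e_-}$, hence $\W'_{e'} = W'_{e'} e^{u_{e_+} + u_{e_-}}$; by the first hypothesis in \eqref{eq:assumption-on-W-m-and-prime}, the inner sum gives $e^{u_{e_+}+u_{e_-}} W_e = \W_e$, yielding \eqref{eq:connection-prime-matrix1}. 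For the second identity, the same argument applied to $s$ shows $B'_{e'}(u',s') = B_e(u,s)$ whenever $k(e') = e$, so $Q'_{e'} = e^{u_{e_+}+u_{e_-}} / B_e$; combining this with the hypothesis $\sum_{e': k(e') = e} m'_{e'} = m_e$ gives $\sum_{e': k(e') = e} M'_{e'} Q'_{e'} = m_e Q_e = M_e Q_e$, which yields \eqref{eq:connection-prime-matrix2}.

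\textbf{Expected obstacle.} The argument is essentially bookkeeping, but two subtleties require care: first, ``collapsed'' edges $e' \in E'$ with $k(e'_+) = k(e'_-)$ contribute nothing to the left sides (because $\phi^{e'} = 0$) and correctly do not appear on the right sides either (as $k(e')$ is then not an edge of $E$ and therefore is not counted in $W_e$ or $m_e$); second, the asymmetry between the index set $\Lambda'$ (which excludes $\rho$) in the definition of $\phi^{e'}$ and the inclusion of pinning edges in $E'$ is reconciled precisely by the hypothesis $k^{-1}(\{\rho\}) = \{\rho\}$, which makes the pinning-edge case indistinguishable from the internal-edge case in the formula for $\phi^{e'}$.
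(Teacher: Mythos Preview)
Your proof is correct. The paper's argument is the same bookkeeping but organized differently: it reformulates \eqref{eq:connection-prime-matrix1} entrywise as $D_{ij}=\sum_{i'\in k^{-1}(\{i\}),\,j'\in k^{-1}(\{j\})}D'_{i'j'}$, then splits each entry into a pinning contribution $1_{\{i=j\}}\W_{i\rho}$ and a pinning-free part $D^0_{ij}$, handles the diagonal of $D^0$ via the row-sum-zero identity $\sum_j D^0_{ij}=0$, and verifies the off-diagonal entries directly from \eqref{eq:assumption-on-W-m-and-prime}; the second identity is then obtained by rerunning the same argument with $\W$ replaced by $MQ$. Your route---working once with the factorization $KF'A'(F')^tK^t$ and the vectors $\phi^{e'}$, then specializing $A'$---treats pinning edges, diagonal entries, and both identities uniformly, which is a bit cleaner; the price is that you must explicitly dispose of collapsed edges (those $e'$ with $k(e'_+)=k(e'_-)$) and the orientation ambiguity in $k(e')$, which the paper's entrywise argument sidesteps automatically.
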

\begin{proof}
  We prove first \eqref{eq:connection-prime-matrix1}. This can be reformulated as
     $D_{ij}=\sum_{i'\in k^{-1}(\{i\}),\, j'\in k^{-1}(\{j\})}D_{i'j'}'$
   for all $i,j\in\Lambda$. 
For $i,j\in\Lambda$, let $D_{ij}^0$ be the contribution to $D_{ij}$ without
pinning:
\begin{align}
  D_{ij}= 1_{\{i=j\}}\W_{i\rho} + D_{ij}^0. 
\end{align}
In particular, $\sum_{j\in\Lambda}D_{ij}^0=0$ for all $i\in\Lambda$.
In the same way, we write $D_{i'j'}'= 1_{\{i'=j'\}}\W_{i'\rho}' + D_{i'j'}^{\prime 0}$. 
For the pinning part we compute
\begin{align}
  \W_{i\rho}= W_{i\rho}e^{u_i}=\sum_{i'\in k^{-1}(\{i\})}W_{i'\rho}'e^{u_i}
  =\sum_{i'\in k^{-1}(\{i\})}\W_{i'\rho}' \quad
  \text{ for all }i\in\Lambda,
\end{align}
where we used \eqref{eq:assumption-on-W-m-and-prime} and $u_{i'}'=u_i$. 
It remains to show $D_{ij}^0=\sum_{i'\in k^{-1}(\{i\}),\, j'\in k^{-1}(\{j\})}D_{i'j'}^{\prime 0}$
for all $i,j\in\Lambda$. 
For all $i$ both sides give 0 when summed over $j\in\Lambda$.
Hence, we only need to prove this identity for $i\neq j$.
For $i\neq j$, we compute
\begin{align}
  -D_{ij}^0= \W_{ij}
  =W_{ij}e^{u_i+u_j}
  =\hspace{-4mm}\sum_{\substack{i'\in k^{-1}(\{i\})\\ j'\in k^{-1}(\{j\})}}\!\! W_{i'j'}'e^{u_i+u_j}
  =\hspace{-4mm}\sum_{\substack{i'\in k^{-1}(\{i\})\\ j'\in k^{-1}(\{j\})}}\!\!\W_{i'j'}'
  =-\hspace{-4mm}\sum_{\substack{i'\in k^{-1}(\{i\})\\ j'\in k^{-1}(\{j\})}}\!\! D_{i'j'}^{\prime 0}.
\end{align}
To prove \eqref{eq:connection-prime-matrix2}, note that for $e'=\{i',j'\}\in E'$
with $k(e')=e=\{i,j\}$, one has $B_{e'}(u',s')=B_e(u,s)$.
We repeat the same argument as above 
replacing $\W_e$ and $\W_{e'}'$, respectively, with
\begin{align}
  (MQ)_e(u,s)=m_e\frac{e^{u_i+u_j}}{B_e(u,s)},\quad 
  (M'Q')_{e'}(u',s')=m_{e'}\frac{e^{u_{i'}'+u_{j'}'}}{B_{e'}(u',s')}
  =m_{e'}\frac{e^{u_i+u_j}  }{B_e(u,s)}
\end{align}
and using assumption \eqref{eq:assumption-on-W-m-and-prime}.
This concludes the proof of \eqref{eq:connection-prime-matrix2}.
\end{proof}

\begin{lemma}[Factorization of the determinant]
\label{le:factorization}   
Let $m_e\ge 0$  be parameters on the edges $e\in E$. We assume
  that the graph $G'=(\Lambda,E'=E_+\setminus\{\{i,\rho\}:i\in\Lambda\})$ obtained
  from $G_+$ by removing $\rho$ and all edges incident to it consists of $n$
  connected components $G_l'=(\Lambda_l,E_l')$, $1\le l\le n$. Let
 $G_l=(\Lambda_l\cup\{\rho\},E_l)$ denote
  the restriction of $G$ to $\Lambda_l\cup\{\rho\}$.
  \begin{enumerate}
  \item[(a)] Then, one has
    \begin{align}
      \label{eq:factor1}
    \det(\Id -M\G)=\prod_{l=1}^n\det(\Id -M_l\G_l),
  \end{align}
  where $M_l$ and $\G_l$ denote the corresponding matrices for the graph $G_l$.
  Indices $l$ with $M_l=0$ can be dropped in the product. If
  for some $\theta\in(0,1)$ one has 
  $\sqrt{M_l}\G_l\sqrt{M_l}\le(1-\theta)\Id$ for all $l$, then
  $\sqrt M\G\sqrt M\le(1-\theta)\Id$ holds as well.
\item[(b)] If for a given $l\in\{1,\ldots,n\}$ all entries of the matrix $M_l$
  vanish with the possible exception of a single non-zero entry $m_{e_l}$
  with $e_l\in E_l$, then
  \begin{align}
    \label{eq:factor2}
    \det(\Id -M_l\G_l)=1-m_{e_l}(\G_l)_{e_l,e_l}. 
  \end{align}
  Moreover, if $m_{e_l}(\G_l)_{e_l,e_l}\le 1-\theta$ for some $\theta\in(0,1)$, then $\sqrt{M_l}\G_l\sqrt{M_l}\le(1-\theta)\Id$.
  \end{enumerate}
\end{lemma}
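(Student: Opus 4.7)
For part (a), the strategy is to read off a block-diagonal structure from the graph decomposition. The key observation is that $D$ is block-diagonal with respect to the partition $\Lambda = \bigsqcup_l \Lambda_l$: by the definition of the components $\Lambda_l$ of $G'$, every positive-weight non-pinning edge of $G$ has both endpoints in a single $\Lambda_l$, while the pinning edges $\{i,\rho\}$ only contribute to the diagonal $(i,i)$-entries of $D$, living inside the block for the component containing $i$. Hence $D=\diag(D_1,\dots,D_n)$ and $D^{-1}$ is block-diagonal, so the matrix entry
\begin{align}
(\G)_{ee'} = \sqrt{Q_eQ_{e'}}\, F_{\cdot e}^t D^{-1} F_{\cdot e'}
\end{align}
vanishes whenever the supports of $F_{\cdot e}$ and $F_{\cdot e'}$ lie in different components. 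Since in the intended application the coefficients $m_e$ vanish on edges $e\notin \bigsqcup_l E_l$ (i.e.\ edges linking two distinct components, which also carry no weight in $G_+$), the matrix $\sqrt{M}\G\sqrt{M}$ decomposes as the direct sum of the blocks $\sqrt{M_l}\G_l\sqrt{M_l}$ plus a zero block. Using the identity $\det(\Id-M\G)=\det(\Id-\sqrt{M}\G\sqrt{M})$ (a consequence of $\det(\Id-AB)=\det(\Id-BA)$), one reads off \eqref{eq:factor1}; indices with $M_l=0$ produce the trivial factor $1$ and may be dropped. The operator bound transfers immediately from the blocks to the whole matrix by this block-diagonal structure.

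For part (b), the matrix $M_l$ has exactly one nonzero diagonal entry $m_{e_l}$, so $M_l\G_l$ has nonzero entries only in row $e_l$. Consequently $\Id - M_l\G_l$ coincides with the identity in every row except row $e_l$, whose diagonal entry is $1-m_{e_l}(\G_l)_{e_l,e_l}$. Expanding the determinant along the columns $e'\neq e_l$ one at a time collapses it to the single diagonal entry $1-m_{e_l}(\G_l)_{e_l,e_l}$, giving \eqref{eq:factor2}. For the spectral bound, $\sqrt{M_l}\G_l\sqrt{M_l}$ is a rank-one positive semidefinite matrix supported at $(e_l,e_l)$ with value $m_{e_l}(\G_l)_{e_l,e_l}$; its unique nonzero eigenvalue is precisely this value, so the hypothesis directly yields $\sqrt{M_l}\G_l\sqrt{M_l}\le(1-\theta)\Id$.

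The main obstacle is not a deep technical difficulty but the careful bookkeeping in part (a): one must clearly distinguish pinning edges, intra-component edges (with or without positive weight), and cross-component edges, and verify that $\sqrt{M}\G\sqrt{M}$ genuinely respects the block decomposition once the supports are tracked through $D^{-1}$ and $F$. Part (b) then reduces to a short direct matrix computation.
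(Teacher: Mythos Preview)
Your proposal is correct and follows essentially the same approach as the paper: both arguments rest on the block-diagonal structure of $D$ (hence $D^{-1}$) with respect to the partition $\Lambda=\bigsqcup_l\Lambda_l$, which forces $F^tD^{-1}F$ (and thus $\G$ and $M\G$) to decouple over the edge blocks $E_l$, and part (b) reduces in both cases to the observation that $\sqrt{M_l}\G_l\sqrt{M_l}$ has a single nonzero entry $m_{e_l}(\G_l)_{e_l,e_l}$. Your version is in fact slightly more explicit than the paper's, since you flag that one needs $m_e=0$ on cross-component edges $e\notin\bigsqcup_l E_l$ and you symmetrize to $\sqrt{M}\G\sqrt{M}$ so that the block structure is transparent on both rows and columns; these are cosmetic differences only.
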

\begin{proof}
  The matrix $D$ and consequently also its inverse
  are block diagonal with blocks indexed by $\Lambda_l$, $1\le l\le n$, i.e., it
  satisfies $D_{ij}=0$ for $i\in\Lambda_k$ and $j\in\Lambda_l$
  with $k\neq l$. Thus, for edges $e\in E_k$, $f\in E_l$, one has
  $(F^tD^{-1}F)_{ef}=\sum_{i,j\in\Lambda}(1_{\{ e_+=i\}}-1_{\{ e_-=i\}})(1_{\{ f_+=j\}}-1_{\{ f_-=j\}}) D_{ij}^{-1}=0$ because $e\cap f\subseteq\{\rho\}$. This shows that
  $M\G=M\sqrt QF^tD^{-1}F\sqrt Q$ is block diagonal with blocks indexed by $E_l$,
  $1\le l\le n$.
  Claim (a) follows.

  For part (b), we write $\det(\Id -M_l\G_l)=\det(\Id -\sqrt{M_l}\G_l\sqrt{M_l})$
  and observe that the matrix $\sqrt{M_l}\G_l\sqrt{M_l}$ has at most one non-zero entry
  $m_{e_l}(\G_l)_{e_l,e_l}$. 
\end{proof}

\begin{remark}
  \label{rem:resistance}
In the situation of part (b) of Lemma~\ref{le:factorization},
let $D_l$ denote the weighted Laplacian on $G_l$ as defined
in \eqref{eq:representation-D-with-pinning-old}. We set $e_l=\{i_0,j_0\}$ and
\begin{align}
  \label{eq:def-c-e}
  c_e:=\frac{\W_e}{Q_{\{i_0,j_0\}}}
  =\frac{W_{ij}e^{u_i+u_j}B_{i_0j_0}}{e^{u_{i_0}+u_{j_0}}}, \quad e=\{i,j\}\in E_l.
\end{align}
Let $\cR^{G_l}(c,i_0\leftrightarrow j_0)$ be the effective resistance between $i_0$ and $j_0$
in the electrical network consisting of the graph $G_l$ endowed with the conductances
$c_e$, $e\in E_l$. 
Then, we have
\begin{align}
  (\G_l)_{e_l,e_l}=&Q_{e_l}(1_{i_0}-1_{j_0})^t D^{-1}_l(1_{i_0}-1_{j_0})
                     =\cR^{G_l}(c,i_0\leftrightarrow j_0),
\end{align}
where $1_i:=(\delta_{ij})_{j\in\Lambda_l}$. Note that $1_i$ is the $i$-th canonical unit
vector in $\R^{\Lambda_l}$ for $i\in\Lambda_l$ and $1_\rho=0$. By Rayleigh's monotonicity
principle, the effective resistance increases when removing edges. In particular,
for $\rho\not\in e_l$, one has 
\begin{align}
  (\G_l)_{e_l,e_l}=\cR^{G_l}(c,i_0\leftrightarrow j_0)\le\cR^{G_l'}(c,i_0\leftrightarrow j_0).
     \label{eq:G-resistance}
\end{align}
\end{remark}
In the next theorem, we combine the results from the current subsection into
a theorem which will be used for concrete applications below. 

Given an edge set $\tilde E$, let $\Lambda_{\tilde E}$ denote the set of all vertices
incident to at least one edge in $\tilde E$. In other words, $\Lambda_{\tilde E}$ is the
union of all elements in $\tilde E$. We call the edge set $\tilde E$ connected
if the graph $(\Lambda_{\tilde E},\tilde E)$ is connected.  

\begin{theorem}[Summary of this section]
  \label{thm:summary-section}
Let  $G=(\Lambda\cup\{\rho\},E)$ be the complete graph with
weights $W_e\ge 0$, $e\in E$, such that $G_+$ is connected, and 
let
$m_e\ge 0$, $e\in E$, be parameters. Let $e_1=\{x_1,y_1\},\ldots,e_n=\{x_n,y_n\}$
be the edges in $E$ with $m_e>0$. Assume that we have pairwise disjoint connected subsets
$E_l$, $1\le l\le n$, of $E_+$ such that $x_l,y_l\in \Lambda_{E_l}$. 
Note that $e_l$ need not be an element of $E_l$ and that the sets $\Lambda_{E_l}$
need not be pairwise disjoint for $l=1,\ldots,n$. 
For $1\le l\le n$, let $\cR^{G_l}(c,x_l\leftrightarrow y_l)$ denote the effective
resistance between $x_l$ and $y_l$
in the electrical network consisting of the graph
$G_l=(\Lambda_{E_l},E_l)$ endowed with the
conductances $c_e$, $e\in E_l$, defined in \eqref{eq:def-c-e} with $\{i_0,j_0\}=\{x_l,y_l\}$. 
Assume that we have parameters $\delta_e\in(0,\infty]$, $e\in E$, such that
on the event $U$ defined in \eqref{eq:def-U} one has 
$m_{e_l}\cR^{G_l}(c,x_l\leftrightarrow y_l)<1$ for all $1\le l\le n$.

Then Assumption \ref{ass:m-W} is valid and on the event $U$, one has 
\begin{align}
  \label{eq:estimate-det-summary}
\det(\Id-M\G)\ge\prod_{i=1}^n(1-m_{e_l}\cR^{G_l}(c,x_l\leftrightarrow y_l)).
\end{align}
Whenever $e_l$ is not a pinning edge, i.e.\ $\rho\not\in e_l$, the term
$\cR^{G_l}(c,x_l\leftrightarrow y_l)$ in the last product may be replaced by
$\cR^{G_l'}(c,x_l\leftrightarrow y_l)$, where $G_l'$ is obtained from $G_l$
by removing $\rho$ and all edges incident to it. 
Moreover, $\cR^{G_l}(c,x_l\leftrightarrow y_l)$ is monotone decreasing
in the conductances $c$: if $c'\ge c\ge 0$ componentwise, then
\begin{align}
  \label{eq:rayleigh}
  \cR^{G_l}(c',x_l\leftrightarrow y_l)\le \cR^{G_l}(c,x_l\leftrightarrow y_l).
\end{align}
\end{theorem}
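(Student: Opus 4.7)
The plan is to apply sequentially Lemma~\ref{le:monotonicity}, Lemma~\ref{le:collapse}, Lemma~\ref{le:factorization}, and Remark~\ref{rem:resistance}: first reduce the weights, then split shared vertices so the $E_l$'s become the components, then factorize the determinant over these components, and finally identify each factor as an effective resistance.

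In the first step I set $W_e^{(1)} := W_e$ if $e \in \bigcup_l E_l$ or $e$ is a pinning edge, and $W_e^{(1)} := 0$ otherwise, keeping pinning edges so that $G_+^{(1)}$ stays connected. Lemma~\ref{le:monotonicity} then gives $\det(\Id - M\G) \geq \det(\Id - M\G^{(1)})$ once Assumption~\ref{ass:m-W} has been verified for $G^{(1)}$. Second, I split every vertex $v \in \Lambda$ lying in several $\Lambda_{E_l}$'s into one copy per component. Since $\rho \notin \Lambda$, the constraint $k^{-1}(\{\rho\}) = \{\rho\}$ of Lemma~\ref{le:collapse} is satisfied automatically. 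The edge weights in $E_l$ are placed entirely on the $l$-th copy and pinning strengths are distributed among copies consistently with \eqref{eq:assumption-on-W-m-and-prime}. Lemma~\ref{le:collapse} then yields $\det(\Id - M\G^{(1)}) \geq \det(\Id - M^{(2)}\G^{(2)})$ in the split graph $G^{(2)}$.

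Third, the components of $G_+^{(2)}$ after removing $\rho$ and all pinning edges are exactly the copies of $\Lambda_{E_l} \cap \Lambda$ (up to isolated vertices with $m_e = 0$, which drop out by Lemma~\ref{le:factorization}(a)). Lemma~\ref{le:factorization}(a) factorizes $\det(\Id - M^{(2)}\G^{(2)}) = \prod_l \det(\Id - M_l\G_l)$. Since $x_l, y_l \in \Lambda_{E_l}$, the edge $e_l$ sits in the $l$-th component and $M_l$ has the single non-zero entry $m_{e_l}$, so Lemma~\ref{le:factorization}(b) gives $\det(\Id - M_l\G_l) = 1 - m_{e_l}(\G_l)_{e_l, e_l}$, which Remark~\ref{rem:resistance} identifies with $1 - m_{e_l}\cR^{G_l}(c, x_l \leftrightarrow y_l)$. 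The hypothesis $m_{e_l}\cR^{G_l} < 1$ on $U$ furnishes a per-component matrix inequality $\sqrt{M_l}\G_l\sqrt{M_l} \leq (1-\theta_l)\Id$, and propagating this back via Lemma~\ref{le:factorization}(a) and Lemmas~\ref{le:collapse} and~\ref{le:monotonicity} yields $\sqrt{M}\G\sqrt{M} \leq (1-\theta)\Id$ for the original graph, which is Assumption~\ref{ass:m-W} by Lemma~\ref{le:equivalence-of-hypotheses}. Chaining the determinant inequalities then produces \eqref{eq:estimate-det-summary}. For non-pinning $e_l$ the replacement $\cR^{G_l} \to \cR^{G_l'}$ is exactly \eqref{eq:G-resistance}, and the Rayleigh monotonicity \eqref{eq:rayleigh} is the classical consequence of Thomson's variational principle.

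The main obstacle will be the careful bookkeeping in the vertex-splitting step: one must specify how pinning strengths (and any remaining weights on edges joining distinct $\Lambda_{E_l}$'s, which are however already zeroed out in the first step) are partitioned among the copies so that \eqref{eq:assumption-on-W-m-and-prime} holds while $G_+^{(2)}$ remains connected to $\rho$, and handle edge cases such as pinning edges lying inside some $E_l$ (so $\rho \in \Lambda_{E_l}$), sharing of the unsplit vertex $\rho$ between several $E_l$'s, or an $E_l$ consisting of a single pinning edge. In all these cases one has to check that the components of $G^{(2)}$ after removing $\rho$ and pinning edges match the graphs $G_l$ of the theorem statement so that the resistance identification via Remark~\ref{rem:resistance} goes through.
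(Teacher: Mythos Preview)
Your proposal follows the paper's proof step for step: monotonicity (Lemma~\ref{le:monotonicity}), vertex splitting (Lemma~\ref{le:collapse}), factorization (Lemma~\ref{le:factorization}), and the resistance identification (Remark~\ref{rem:resistance}). The one missing device is precisely the connectedness issue you flag at the end. Your assertion that keeping the original pinning edges makes $G_+^{(1)}$ connected already fails if some vertex outside every $\Lambda_{E_l}$ has zero pinning; more seriously, after splitting, a copy of $\Lambda_{E_l}\setminus\{\rho\}$ with $\rho\notin\Lambda_{E_l}$ whose vertices all happen to carry zero pinning is disconnected from $\rho$ in $G_+^{(2)}$, so $D^{(2)}$ is singular and $\G^{(2)}$ is undefined; Lemmas~\ref{le:collapse} and~\ref{le:factorization} then do not apply. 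No redistribution of the \emph{existing} pinning strengths consistent with \eqref{eq:assumption-on-W-m-and-prime} can repair this when the total available pinning on $\Lambda_{E_l}$ is zero.

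The paper resolves this by first adding a small $\varepsilon>0$ to every pinning weight $W_{i\rho}$, reducing to the case of strictly positive pinnings. In the splitting step each copy $i^l$ is then assigned pinning at least $\varepsilon/n>0$, so $G_+'$ is connected and all the lemmas apply. The hypothesis $m_{e_l}\cR^{G_l}<1$ is preserved under the $\varepsilon$-perturbation by Rayleigh monotonicity, Assumption~\ref{ass:m-W} is obtained uniformly in $\varepsilon$, and both sides of \eqref{eq:estimate-det-summary} are continuous in $\varepsilon$, so one passes to the limit $\varepsilon\to 0$. With this $\varepsilon$-perturbation inserted, your argument is the paper's. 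One small further remark: after splitting, the $l$-th component carries (small) extra pinning beyond what is in $E_l$, so Remark~\ref{rem:resistance} gives $(\G_l)_{e_l,e_l}\le \cR^{G_l}(c,x_l\leftrightarrow y_l)$ rather than equality; but this inequality goes the right way for \eqref{eq:estimate-det-summary}.
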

\begin{proof}
 Take $\theta>0$ so small that $\theta\le 1-m_{e_l}\cR^{G_l}(c,x_l\leftrightarrow y_l)$
    for all $1\le l\le n$.
  Adding a small positive number $\eps$ to all pinning weights $W_{i\rho}$,
  $i\in\Lambda$, it suffices to prove the theorem under the additional
  assumption that all pinning weights are positive: $W_{i\rho}>0$.
  Indeed, the inequality 
    $\theta\le 1-m_{e_l}\cR^{G_l}(c,x_l\leftrightarrow y_l)$
    for $\eps=0$ implies the same inequality for $\eps>0$ by Rayleigh's
    monotonicity principle.
    Moreover, both sides of \eqref{eq:estimate-det-summary} are
    continuous functions of the additional summand $\eps\ge 0$, which implies
    that \eqref{eq:estimate-det-summary} holds for $\eps=0$ given that it
    holds for small $\eps>0$.
   By the monotonicity lemma \ref{le:monotonicity}, it is sufficient to prove
  Assumption \ref{ass:m-W} and the inequality
  \eqref{eq:estimate-det-summary} for the modified model obtained by taking $W_e=0$ 
  for all edges $e\in E$ which are in none of the sets 
  $E_l$, $1\le l\le n$. Note that the right-hand side of \eqref{eq:estimate-det-summary}
  does not depend on the weights we set to 0 and 
  that the modified model remains well defined
  because of the positive pinning weight of all vertices.

  In order to apply Lemma~\ref{le:collapse}, we separate vertices as follows:
  We consider the disjoint union $\Lambda'$ of the 
  sets $\Lambda_{E_l}\setminus\{\rho\}$ together will all vertices $i\in\Lambda$ which are in  none of these
  sets. More formally, let $i^l$ denote the copy of $i\in\Lambda_{E_l}\setminus\{\rho\}$ corresponding to the
  subgraph $(\Lambda_{E_l},E_l)$. Then, one has
\begin{align}
  \Lambda'=\{i^l:1\le l\le n, i\in \Lambda_{E_l}\setminus\{\rho\}\}\cup
  \{i\in\Lambda:i\notin \Lambda_{E_l}\text{ for all }1\le l\le n\}. 
\end{align}  
Let $G'=(\Lambda'\cup\{\rho\},E')$ be the corresponding complete graph and
$k:\Lambda'\cup\{\rho\}\to\Lambda\cup\{\rho\}$ be the natural map
$i^l\mapsto i$ for $i\in\Lambda_{E_l}\setminus\{\rho\}$, $i\mapsto i$ for vertices not contained in
any $\Lambda_{E_l}\setminus\{\rho\}$, and $\rho\mapsto\rho$, cf.\ Fig.\ \ref{fig:collapse}.
 \begin{figure}
 \centerline{ \includegraphics[width=0.9\textwidth]{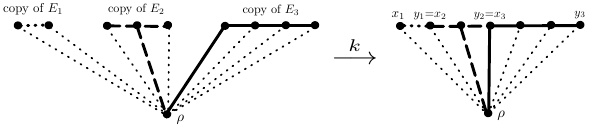}}
  \caption{illustration of the map $k$}
  \label{fig:collapse}
\end{figure}
  We endow the edges $e'\in E'$ with weights
  $W_{e'}'\ge 0$ as follows. For any edge $\{i,j\}\in E_l$ with $1\le l\le n$ and
  $\rho\notin\{i,j\}$,
  the corresponding edge $\{i^l,j^l\}$ inherits its weight: $W_{i^lj^l}'=W_{ij}$.
  Observe that a vertex $i\in\Lambda'$ can be incident to edges in several $E_l$'s.
  For pinning edges $\{i,\rho\}$, we split the weight $W_{i\rho}$ as follows.
  In the case $\{i,\rho\}\in E_l$ for some $l$, we set 
    $W_{i^k\rho}'=(W_{i\rho}-\varepsilon)\delta_{kl}+\varepsilon/|\{l':i\in\Lambda_{E_{l'}}\}|$,
    $1\le k\le n$; the index $l$ is unique since the $E_l$ are
    pairwise disjoint. In the case $\{i,\rho\}\notin E_l$ for all $l$,
    we set $W_{i^k\rho}'=W_{i\rho}/|\{l:i\in\Lambda_{E_l}\}|$ for all $k$.
    Note that $\sum_{k=1}^nW_{i^k\rho}'=W_{i\rho}$ and $W_{i^k\rho}'\ge\eps/n>0$
    for all $k$ since $W_{i\rho}\ge\eps$. 
  For all other edges $e\in E'$, we set $W_{e'}'=0$.
  For $1\le l\le n$, we set $e_l'=\{x_l^l,y_l^l\}\in E'$ and $m_{e_l'}'=m_{e_l}$, but
  $m_{e'}'=0$ for all other edges $e'\in E'$. Then, Lemmas \ref{le:collapse} and
  \ref{le:factorization} and Remark~\ref{rem:resistance} are applicable.
  In particular, on $U$ the bound $m_{e_l}\cR^{G_l}(c,x_l\leftrightarrow y_l)\le 1-\theta$ and
  Remark~\ref{rem:resistance} imply 
  $m_{e_l}(\G_l)_{e_l,e_l}\le 1-\theta$ and then, by Lemma~\ref{le:factorization}(b),
  $\sqrt{M_l}\G_l\sqrt{M_l}\le(1-\theta)\Id$ for all $l$. 
  As a consequence, by Lemma~\ref{le:factorization}(a),
  $\sqrt{M'}\G'\sqrt{M'}\le(1-\theta)\Id$, which implies
  $\sqrt{M}\G\sqrt{M}\le(1-\theta)\Id$ by
  Lemma~\ref{le:collapse}. Note that this estimate is uniform in $\varepsilon>0$. 
  Thus, by Lemma~\ref{le:equivalence-of-hypotheses}, Assumption \ref{ass:m-W} is satisfied,
  also in the limit $\varepsilon\to 0$.
  Combining formulas \eqref{eq:claim-gluing}, \eqref{eq:factor1}, \eqref{eq:factor2},
  and \eqref{eq:G-resistance}, we obtain
  \begin{align}
    \det(\Id-M\G(u,s))\ge&\det(\Id-M'\G'(u',s'))
    =\prod_{l=1}^n\det(\Id -M_l\G_l)\nonumber\\
    =&\prod_{l=1}^n(1-m_{e_l}(\G_l)_{e_l,e_l})
    \ge \prod_{l=1}^n(1-m_{e_l}\cR^{G_l}(c,x_l\leftrightarrow y_l)).
  \end{align}
  Rayleigh's monotonicity principle and its consequence \eqref{eq:G-resistance} allow
us to conclude. 
\end{proof}

\subsection{Proof of Theorem~\ref{thm:one-dim}}
\label{sec:boundingoned}

In this section, we will prove Theorem~\ref{thm:one-dim}.
We consider the model described in Section \ref{sec:onedimensionaleff}
with $\alpha>3$, possibly extended by additional non-nearest
neighbor edges~$e$ with weights $W_e\ge 0$.
Throughout this section, $N$ is fixed and hence, we drop the dependence on $N$ in the
notation. The important point is that the final estimates are uniform in $N$. 

Recall the notation $e =\{x_e,y_e\}$ with $x_e<y_e$
for any edge $e$. We take a sequence
of cutoff parameters $\delta_l$, $l\in\N$ and 
associate the $\delta_l$ to nearest-neighbor edges on $\N$ by setting 
$\delta_{\{l-1,l\}}:=\delta_l$ for $l\ge 2$. 
Remember that we defined $\chi_{ij}=\chi_e=1_{(-\infty,1+\delta_e)}$ for $e=\{i,j\}$.
For $i<j$, we set 
\begin{align}
  \overline\chi_{ij}:=\prod_{l=i+1}^j\chi_{l-1,l}(B_{l-1,l}). 
\end{align}

\begin{lemma}
  \label{le:exp-one-product}
  Let  $E',E''\subseteq E_N$ be disjoint subsets of $E_N$, where every $e\in E'$
  is a nearest neighbor edge: $y_e-x_e=1$, and for any two
  different $e,\tilde e\in E'\cup E''$ the open intervals
  $(x_e,y_e)$ and $(x_{\tilde e},y_{\tilde e})$ are disjoint. Assume that 
  $\sum_{l=1}^\infty\sqrt{\delta_l}<\infty$ and set $C_\delta=e^{\sqrt 2\sum_{l=1}^\infty\sqrt{\delta_l}}$,
  which is uniform in $N$. 
  Then, for all exponents $m_e$ satisfying $0\le m_e<W_e$ for $e\in E'$ and 
  $0\le m_e< \left(C_\delta\sum_{i=x_e+1}^{y_e}\frac{1}{W_{i-1,i}}\right)^{-1}$ for $e\in E''$, 
  we have 
   \begin{align}
    \E_{W,h}^{[N]}\left[\prod_{e\in E'}B_e^{m_e}\cdot\prod_{e\in E''}(B_e^{m_e}\overline\chi_e)
    \right]\le\prod_{e\in E'}\frac{1}{1-\frac{m_e}{W_e}}
    \prod_{e\in E''}\frac{1}{1-C_\delta m_e\sum_{i=x_e+1}^{y_e}\frac{1}{W_{i-1,i}}}.
  \end{align}
\end{lemma}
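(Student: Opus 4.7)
The plan is to combine Lemma~\ref{le:upper-bound-prod-B-chi-det} with Theorem~\ref{thm:summary-section}, using the cutoffs $\overline\chi_e$ to convert the random effective resistances along nearest-neighbor paths into deterministic bounds.

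First, I would make the following choice of cutoff parameters $(\delta_e)_{e\in E}$: for each nearest-neighbor edge $\{l-1,l\}$ lying on the path of some $e\in E''$ (equivalently $l\in\bigcup_{e\in E''}(x_e,y_e]$), set $\delta_{\{l-1,l\}}:=\delta_l$; for every other edge $e\in E$, set $\delta_e:=\infty$ so that $\chi_e\equiv 1$. Since the open intervals $(x_e,y_e)$ for $e\in E'\cup E''$ are pairwise disjoint, the half-open intervals $(x_e,y_e]$ for $e\in E''$ are pairwise disjoint and no $E'$ edge belongs to any $E''$-path; hence $1_U=\prod_{e\in E}\chi_e(B_e)$ coincides with $\prod_{e\in E''}\overline\chi_e$.

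Next, I would apply Theorem~\ref{thm:summary-section} with marked edges $(e_l)_l=E'\cup E''$, choosing $E_l:=\{e_l\}$ for $e_l\in E'$ and $E_l:=\{\{i-1,i\}:x_l<i\le y_l\}$ (the nearest-neighbor path) for $e_l\in E''$; the same interval argument ensures the $E_l$ are pairwise disjoint. For $e_l\in E'$ the two-vertex graph yields $\cR^{G_l}=1/(W_{e_l}B_{e_l})\le 1/W_{e_l}$ using $B_e\ge 1$. For $e_l\in E''$ series resistance along the path gives
\begin{align*}
\cR^{G_l}(c,x_l\leftrightarrow y_l)\le\sum_{i=x_l+1}^{y_l}\frac{e^{(u_{x_l}-u_{i-1})+(u_{y_l}-u_i)}}{W_{i-1,i}B_{x_l,y_l}}.
\end{align*}

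The central step is the pointwise estimate of this expression on $U$. From $B_{l-1,l}\le 1+\delta_l$ together with $B_{l-1,l}\ge\cosh(u_{l-1}-u_l)$ and the elementary bound $\cosh x-1\ge x^2/2$, one gets $|u_{l-1}-u_l|\le\sqrt{2\delta_l}$ for every $l$. Telescoping then controls $|u_{x_l}-u_{i-1}|$ and $|u_{y_l}-u_i|$ via steps indexed by the \emph{disjoint} sets $\{x_l+1,\ldots,i-1\}$ and $\{i+1,\ldots,y_l\}$; their union is contained in $\{1,2,\ldots\}\setminus\{i\}$, so the sum is at most $\sqrt 2\sum_{l\ge 1}\sqrt{\delta_l}$ and exponentiates to $C_\delta$. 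Combined with $B_{x_l,y_l}\ge 1$, this gives $\cR^{G_l}\le C_\delta\sum_{i=x_l+1}^{y_l}W_{i-1,i}^{-1}$ on $U$. The hypotheses on $m_e$ thus yield $m_{e_l}\cR^{G_l}<1$ in both cases, so Theorem~\ref{thm:summary-section} provides the pointwise lower bound
\begin{align*}
\det(\Id-M\G)\ge\prod_{e\in E'}\Big(1-\tfrac{m_e}{W_e}\Big)\prod_{e\in E''}\Big(1-C_\delta m_e\sum_{i=x_e+1}^{y_e}\tfrac{1}{W_{i-1,i}}\Big)
\end{align*}
on $U$.

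Finally, I would insert this estimate into Lemma~\ref{le:upper-bound-prod-B-chi-det} applied with $m_e:=0$ for edges outside $E'\cup E''$, identify $1_U=\prod_{e\in E''}\overline\chi_e$, pull the deterministic product out of the expectation, and divide to conclude. The main obstacle is the telescoping estimate inside the effective resistance: one has to notice that the two partial sums traverse disjoint index ranges inside $(x_l,y_l]\setminus\{i\}$, so the exponent accumulates only $\sqrt 2\sum_l\sqrt{\delta_l}$ rather than twice this quantity, matching the constant $C_\delta$ in the statement.
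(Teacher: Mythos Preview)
Your proposal is correct and follows essentially the same route as the paper: apply Theorem~\ref{thm:summary-section} with $E_l=\{e_l\}$ for $e_l\in E'$ and $E_l$ the nearest-neighbor path for $e_l\in E''$, bound the resistances pointwise on $U$ via the telescoping estimate $|u_{l-1}-u_l|\le\sqrt{2\delta_l}$ (summing over indices in $(x_e,y_e]\setminus\{i\}$ to hit exactly $\log C_\delta$), and then feed the resulting determinant bound into Lemma~\ref{le:upper-bound-prod-B-chi-det}. The only cosmetic difference is that the paper first invokes Lemma~\ref{le:monotonicity} to set all non-nearest-neighbor weights to zero before applying Theorem~\ref{thm:summary-section}, whereas you apply the theorem directly; since Theorem~\ref{thm:summary-section} already handles arbitrary weights, your shortcut is legitimate.
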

\begin{proof}
  We will show that Assumption \ref{ass:m-W} is satisfied and
  on the support of $\prod_{e\in E''}\overline\chi_e$, we have 
  \begin{align}
    \label{eq:lower-bound-det-id-mg}
    \det(\Id-M\G)\ge \prod_{e\in E'}\left(1-\frac{m_e}{W_e}\right)
    \prod_{e\in E''}\left(1-C_\delta m_e\sum_{i=x_e+1}^{y_e}\frac{1}{W_{i-1,i}}\right).
  \end{align}
  It suffices to prove this in the case $W_{ij}=0$ for all edges
  $\{i,j\}$ with $|i-j|\ge 2$. The general case follows from Lemma
  \ref{le:monotonicity}.   
 Using \eqref{eq:lower-bound-det-id-mg} and Lemma~\ref{le:upper-bound-prod-B-chi-det}, we argue 
   \begin{align}
   \E_{W,h}^{[N]}&\left[\prod_{e\in E'}B_e^{m_e}\cdot\prod_{e\in E''}(B_e^{m_e}\overline\chi_e)
     \right]
   =\E_{W,h}^{[N]}\left[\prod_{e\in E'}B_e^{m_e}\cdot\prod_{e\in E''}(B_e^{m_e}\overline\chi_e)
      \frac{\det(\Id-M\G)}{\det(\Id-M\G)} \right]\nonumber\\
\le& \frac{\E_{W,h}^{[N]}\left[\prod_{e\in E'}B_e^{m_e}\cdot\prod_{e\in E''}(B_e^{m_e}\overline\chi_e)
      \det(\Id-M\G) \right]}{ \prod_{e\in E'}\left(1-\frac{m_e}{W_e}\right)
     \prod_{e\in E''}\left(1-C_\delta m_e\sum_{i=x_e+1}^{y_e}\frac{1}{W_{i-1,i}}\right)}\nonumber\\
\le& \prod_{e\in E'}\frac{1}{1-\frac{m_e}{W_e}}
    \prod_{e\in E''}\frac{1}{1-C_\delta m_e\sum_{i=x_e+1}^{y_e}\frac{1}{W_{i-1,i}}},
   \end{align}
   which concludes the proof.
   
   To prove \eqref{eq:lower-bound-det-id-mg}, we apply Theorem~\ref{thm:summary-section}
   as follows. For $e\in E'\cup E''$, we consider the pairwise disjoint sets of edges 
   $E_e=\{\{ i-1,i\}:\, x_e<i\le y_e\}$ and the corresponding graphs
   $G_e=(\{x_e,\ldots,y_e\},E_e)$. In particular, $E_e=\{e\}$ for $e\in E'$. 
   We claim that 
   \begin{align}
     \label{eq:hypothesis-theorem-summary}
     m_e\cR^{G_e}(c,x_e\leftrightarrow y_e)<1 \quad\text{ for }e\in E'\cup E''
   \end{align}
   holds on $U\subseteq\supp\prod_{e\in E''}\overline\chi_e$, cf.\ formula~\eqref{eq:def-U}.
   Indeed, for the nearest-neighbor edges $e\in E'$, we have
   \begin{align}
     \label{eq:est-R1}
 m_e\cR^{G_e}(c,x_e\leftrightarrow y_e)=&\frac{m_e}{c_e}
  =\frac{m_ee^{u_{x_e}+u_{y_e}}}{W_ee^{u_{x_e}+u_{y_e}}B_e}\le\frac{m_e}{W_e}<1.
\end{align}
For edges $e\in E''$, we will show
\begin{align}
  \label{eq:est-R2}
  m_e\cR^{G_e}(c,x_e\leftrightarrow y_e)\le m_eC_\delta \sum_{i=x_e+1}^{y_e}\frac{1}{W_{i-1,i}}<1.
\end{align}
The inequalities ``$<1$'' in \eqref{eq:est-R1} and  \eqref{eq:est-R2} follow from
the assumptions on $m_e$.
Therefore, by Theorem~\ref{thm:summary-section}, Assumption \ref{ass:m-W} is valid
and we have
   \begin{align}
     \det(\Id-M\G) \ge \prod_{e\in E'\cup E''}(1-m_e\cR^{G_e}(c,x_e\leftrightarrow y_e)). 
   \end{align}
   Inserting the bounds \eqref{eq:est-R1} and  \eqref{eq:est-R2}, we obtain
   \eqref{eq:lower-bound-det-id-mg}.

   Finally we prove \eqref{eq:est-R2}. Consider an edge $e\in E''$.
Using the definition \eqref{eq:def-c-e} of the conductances $c_e$ and
the series law for the resistances between $x_e$ and $y_e$, we obtain 
\begin{align}
\cR^{G_e}(c,x_e\leftrightarrow y_e)
  &= \sum_{i=x_e+1}^{y_e}\frac{1}{c_{i-1,i}}
     =\sum_{i=x_e+1}^{y_e}\frac{e^{u_{x_e}+u_{y_e}}}{W_{i-1,i}e^{u_{i-1}+u_i}B_e}\nonumber\\
     &\le \sum_{i=x_e+1}^{y_e}\frac{e^{u_{x_e}-u_{i-1}+u_{y_e}-u_i}}{W_{i-1,i}},
     \label{eq:upper-bound-R-j+1-y-new}
\end{align}
where we used $B_e\ge 1$. On the support of $\overline\chi_e$, one has 
for all $l\in\{x_e+1,\ldots,y_e\}$, 
\begin{align}
1+\delta_l>B_{l-1,l}\ge\cosh(u_{l-1}-u_l)\ge 1+\frac12(u_{l-1}-u_l)^2  
\end{align}
and hence $|u_{l-1}-u_l|\le\sqrt{2\delta_l}$. For $i=x_e+1,\ldots,y_e$, we estimate 
\begin{align}
  u_{x_e}-u_{i-1}+u_{y_e}-u_i\le\sum_{\substack{l=x_e+1\\l\neq i}}^{y_e}|u_{l-1}-u_l|
  \le \sqrt 2\sum_{\substack{l=x_e+1\\l\neq i}}^{y_e}\sqrt{\delta_l}
  \le \log C_\delta .
\end{align}
Inserting this in \eqref{eq:upper-bound-R-j+1-y-new} concludes the proof of
\eqref{eq:est-R2}. 
\end{proof}

Using Lemma~\ref{le:exp-one-product}, we prove now the bounds
  in the inhomogeneous one-dimensional model. 

\medskip\noindent
\begin{proof}[Proof of Theorem~\ref{thm:one-dim}]
  We start with a pointwise bound on each
  factor $B_e^{m_e}$ corresponding to an edge
    $e\in E''$. For this purpose, 
  keeping $e$ fixed for the moment, we
  abbreviate $x=x_e$, $y=y_e\ge x_e+1$ and
  $m=m_e\ge 0$.
  We consider the partition of unity (cf.\ Fig.~\ref{fig:onedpartition})
  %%%%%%%%%%%%%%%%%%%%%%%%%%%%%%%%%%
  \begin{figure}
 \centerline{ \includegraphics[width=7cm]{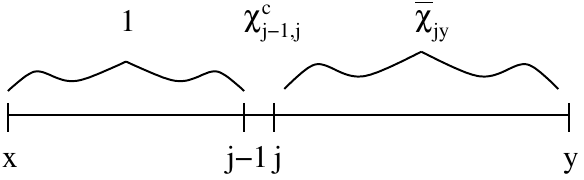}}
  \caption{cut-off functions used in the partition of unity}
  \label{fig:onedpartition}
\end{figure}
%%%%%%%%%%%%%%%%%%%%%%%%%%%%%%%%%%%%%%%%%
  \begin{align}
    &1=\overline\chi_{xy}+(1-\overline\chi_{xy})
    =\overline\chi_{xy}+\sum_{j=x+1}^y\chi^c_{j-1,j}\overline\chi_{j,y}
     \text{ with }
    \chi^c_{j-1,j}:=1-\chi_{j-1,j}(B_{j-1,j}), 
  \end{align}
  where the constants $\delta_j$ appearing in the definition
  $\chi_{j-1,j}(B)=1_{\{B< 1+\delta_j\}}$ will be specified later. 
  This means that
  either all nearest-neighbor pairs between $x$ and $y$ are protected
  in the sense that the factor $\chi_{j-1,j}$ of the corresponding edge
  is present or there is a largest index $j$ for which the edge
  $\{j-1,j\}$ is unprotected. We
  estimate the $j$-th summand in the last sum, using
  \begin{align}
    \chi_{j-1,j}^c=1_{\{B_{j-1,j}\ge 1+\delta_j\}}
    \le \left(\frac{B_{j-1,j}}{1+\delta_j}\right)^{p_j}
  \end{align}
  where $p_j>0$ will be also chosen later.
  Inserting this bound and multiplying with $B_e^m=B_{xy}^m$ yields
  \begin{align}
    \label{eq:bound-Bem}
    &B_e^m\le B_{xy}^m\overline\chi_{xy}
      +\sum_{j=x+1}^y(1+\delta_j)^{-p_j}B_{xy}^mB_{j-1,j}^{p_j}
      \overline\chi_{j,y}.
  \end{align}
  We estimate $B_{xy}$ in the $j$-th summand as follows.
  By \cite[Lemma~2 in Section 5.2]{disertori-spencer-zirnbauer2010},
  one has $B_{ij}\le 2B_{ik}B_{kj}$ for all $i,j,k$.
  Given $j$ with $x+1\le j\le y$, this yields 
  \begin{align}
    B_{xy}\le 2^{j-x}B_{jy}\prod_{i=x+1}^jB_{i-1,i},
  \end{align}
  with the convention $B_{yy}=1$. 
  Inserting this in the $j$-th summand in \eqref{eq:bound-Bem}, we obtain
  \begin{align}
    \label{eq:j-th-summand-new}
    B_{xy}^m
    \le B_{xy}^m\overline\chi_{xy}
    +\sum_{j=x+1}^y
    \frac{2^{(j-x)m}}{(1+\delta_j)^{p_j}}
    \left(\prod_{i=x+1}^{j-1}B_{i-1,i}^m\right)B_{j-1,j}^{m+p_j}B_{jy}^m
    \overline\chi_{jy}.
  \end{align}
  We now take the product of both sides of this inequality
  over $e=\{x_e,y_e\}\in E''$,
  not suppressing the $e$-dependence in the notation anymore,
  and expand the product of sums on the r.h.s.\ in a sum of products.
  To keep the calculations readable, we introduce some abbreviations,
  locally for this proof.
  Let $X_{e,j}= \frac{2^{(j-x)m}}{(1+\delta_j)^{p_j}}
    \left(\prod_{i=x+1}^{j-1}B_{i-1,i}^m\right)B_{j-1,j}^{m+p_j}B_{jy}^m
    \overline\chi_{jy}$, $j\in\{x_e+1,\ldots, y_e\}$,
  denote the summand on the r.h.s.\ of \eqref{eq:j-th-summand-new}
  indexed by $j$, and $X_{e,x_e}=B_{x_ey_e}^{m_e}\overline\chi_{x_ey_e}$ 
  the first summand.
  We obtain
  \begin{align}
    \prod_{e\in E''}B_{x_ey_e}^{m_e}\le
    \prod_{e\in E''}\sum_{j=x_e}^{y_e}X_{e,j}
    =\sum_{(j_e)_{e\in E''}\in\cI}\prod_{e\in E''}X_{e,j_e}
  \end{align}
  with the cartesian product
  $\cI=\prod_{e\in E''}\{x_e,\ldots, y_e\}$. Therefore,
  \begin{align}
 \E\left[\prod_{e\in E'}B_{x_ey_e}^{m_e}\cdot\prod_{e\in E''}B_{x_ey_e}^{m_e}\right]
    \le\sum_{(j_e)_{e\in E''}\in\cI}\E\left[\prod_{e\in E'}B_{x_ey_e}^{m_e}\cdot\prod_{e\in E''}X_{e,j_e}\right]. 
  \end{align}
  We will estimate the expectations in the sum using Lemma~\ref{le:exp-one-product}
  showing that it is applicable for an appropriate choice of $\delta_j$ and $p_j$.
  We illustrate first the strategy in two special cases. 
  In the special case that $E'$ consists of a single edge $e=\{x_e,y_e\}$ and
  $E''=\emptyset$, Lemma~\ref{le:exp-one-product} gives 
  \begin{align}
     \label{eq:def-Y-e}
    \E\left[B_{x_ey_e}^{m_e}\right]\le \frac{1}{1-\frac{m_e}{W_e}}=:Y_e
  \end{align}
  because $m_e<W_e$. Similarly, if $E'=\emptyset$ and $E''$ consists of a single edge $e=\{x_e,y_e\}$, Lemma~\ref{le:exp-one-product} will give us 
  \begin{align}
    \label{eq:def-Y-ex}
    \E\left[X_{e,x_e}\right]= \E\left[B_{x_ey_e}^{m_e}\overline\chi_{x_ey_e}\right]
    \le \frac{1}{1-C_\delta m_e\sum_{k=x_e+1}^{y_e}\frac{1}{W_{k-1,k}}}=:Y_{e,x_e}
  \end{align}
  and for $j\in\{x_e+1,\ldots, y_e\}$
  \begin{align}
    &\E\left[X_{e,j}\right]= \frac{2^{(j-x_e)m_e}}{(1+\delta_j)^{p_j}}
    \E\left[\left(\prod_{i=x_e+1}^{j-1}B_{i-1,i}^{m_e}\right)B_{j-1,j}^{m_e+p_j}B_{jy_e}^{m_e}
    \overline\chi_{jy_e}\right]\nonumber\\
    &\le \frac{2^{(j-x_e)m_e}}{(1+\delta_j)^{p_j}}
    \prod_{i=x_e+1}^{j-1}\frac{1}{1-\frac{m_e}{W_{i-1,i}}}
    \cdot \frac{1}{1-\frac{m_e+p_j}{W_{j-1,j}}}
                 \cdot\frac{1}{1-C_\delta m_e\sum_{k=j+1}^{y_e}\frac{1}{W_{k-1,k}}}
         =:Y_{e,j}.
         \label{eq:def-Y-e-j}
  \end{align}
  In the general case, Lemma~\ref{le:exp-one-product} will give 
  \begin{align}
    \E\left[\prod_{e\in E'}B_{x_ey_e}^{m_e}\cdot \prod_{e\in E''}X_{e,j_e}\right]
    \le\prod_{e\in E'}Y_e\cdot \prod_{e\in E''}Y_{e,j_e}
  \end{align}
  for all $(j_e)_{e\in E''}\in\cI$. 
  Summing over $\cI$ and recombining the sum over products to a product of
  sums, we obtain
  \begin{align}
    \label{eq:E-prod-B-prod-Y}
    \E\left[\prod_{e\in E'\cup E''}B_{x_ey_e}^{m_e}\right]\le
    \sum_{(j_e)_{e\in E''}\in\cI}\prod_{e\in E'}Y_e\cdot \prod_{e\in E''}Y_{e,j_e}
    =\prod_{e\in E'}Y_e\cdot \prod_{e\in E''}\sum_{j=x_e}^{y_e}Y_{e,j}.
  \end{align}
  To check applicability of Lemma~\ref{le:exp-one-product}, we need $\delta_j$ and $p_j$
  such that $\sum_{l=1}^\infty\sqrt{\delta_l}<\infty$ and all denominators
  in the $Y$-factors are strictly positive. For later use, some estimates
  will be sharper than needed at this point. 

  Recall the assumption on the weights $W_{j-1,j}\ge\overline Wj^\alpha$ with
  $\alpha>3$. Let $\kappa\in(0,1]$ and $\gamma\ge 0$ with $\alpha-\gamma>3$
  be given. We set $\delta_j:=j^{-(\alpha-\gamma-1)}$ for $j\in\N$.
  With this choice, we have 
  \begin{align}
    \label{eq:bound-sum-delta}
   & \sum_{l=1}^\infty\sqrt{\delta_l}
  =1+\sum_{l=2}^\infty l^{-\frac{\alpha-\gamma-1}{2}}\le 1+\int_1^\infty t^{-\frac{\alpha-\gamma-1}{2}}\, dt
  =\frac{\alpha-\gamma-1}{\alpha-\gamma-3}<\infty.
\end{align}
We take now $\cdrei,\cvier,\wnull$
    as in \eqref{eq:def-W0} and \eqref{eq:def-czwei} and
    $\overline W\ge\wnull$. For a given $e=\{x,y\}\in E''$,
    we consider $Y_{e,j}$ with $x\le j\le y$,
    suppressing the $e$-dependence again in the notation.
    We need to estimate $m/W_{i-1,i}$ for $x<i<y$, $(m+p_j)/W_{j-1,j}$ for $x<j\le y$,
    and $C_\delta m\sum_{k=j+1}^y\frac{1}{W_{k-1,k}}$ for $x\le j\le y$. 
Recall that
    $m_x=\cdrei\overline W x^\gamma= \cvier\overline W x^\gamma/\log 2$
    and let $m\in[0,m_x]$. 
For all $i\ge x$, using 
$\gamma-\alpha<0$ and $\cdrei\le(16\log 2)^{-1}\le\frac14$, we estimate
\begin{align}
  \label{eq:m-W-14}
\quad & \frac{m}{W_{i-1,i}}\le \frac{m_x}{\overline Wi^\alpha}
 =\cdrei x^\gamma i^{-\alpha}
  \le\cdrei x^{\gamma-\alpha}\le \cdrei\le\frac14.
\end{align}
For $j\in\N$, we set 
$p_j:=4\cvier \overline Wj^\alpha$.
For $x<j\le y$, we obtain with $\cvier\le\frac{1}{16}$ the bound 
\begin{align}
  &  \frac{m+p_j}{W_{j-1,j}}
    \le \frac14+ \frac{p_j}{W_{j-1,j}}
    \le \frac14+\frac{p_j}{\overline Wj^\alpha}=\frac14+4\cvier\le\frac12.
\label{eq:est-m-p-Winv}
\end{align}
Finally, for $x\le j\le y$, we estimate, using $\alpha>3$, 
\begin{align}
  \sum_{k=j+1}^y\frac{1}{W_{k-1,k}}
  \le\sum_{k=j+1}^y\frac{1}{\overline Wk^\alpha}
  \le\int_x^\infty \frac{1}{\overline Wt^\alpha}\, dt
  =\frac{x^{1-\alpha}}{\overline W(\alpha-1)}\le\frac{x^{1-\alpha}}{2\overline W} .
  \label{eq:bound-sum-1-over-W}
\end{align}
Because of \eqref{eq:bound-sum-delta}, the constant $C_\delta$ from
Lemma~\ref{le:exp-one-product} satisfies
$1\le C_\delta=e^{\sqrt 2\sum_{l=1}^\infty\sqrt{\delta_l}}
  \le e^{\sqrt 2\frac{\alpha-\gamma-1}{\alpha-\gamma-3}}$.
  Hence, our choice of $\cdrei$ in \eqref{eq:def-W0} and \eqref{eq:def-czwei}
yields $\cdrei\le\frac23\kappa/C_\delta$. 
Together with \eqref{eq:bound-sum-1-over-W}, using $\gamma+1-\alpha<0$, it follows 
\begin{align}
  \label{eq:est-C-m-sum}
  & C_\delta m\sum_{k=j+1}^y\frac{1}{W_{k-1,k}}\le C_\delta m_x \frac{x^{1-\alpha}}{2\overline W}
  =\frac{C_\delta\cdrei x^{\gamma+1-\alpha}}{2}
  \le\frac{C_\delta\cdrei}{2} \le \frac{\kappa}{3}<1.
\end{align}
This shows that the assumptions
of Lemma~\ref{le:exp-one-product} needed above are fulfilled.

We now use the estimates above to bound the $Y$-variables defined
in \eqref{eq:def-Y-ex} and \eqref{eq:def-Y-e-j}. 
Using $\kappa\in(0,1]$, one has $(1-t)^{-1}\le 1+\kappa/2$ for all $t\in[0,\kappa/3]$.
Consequently, \eqref{eq:est-C-m-sum} implies
\begin{align}
  \label{eq:est-1-over-1-C-delta}
  &  \frac{1}{1-C_\delta m\sum_{k=j+1}^y\frac{1}{W_{k-1,k}}}
  \le 1+\frac{\kappa}{2}\le 2\text{ for }x\le j\le y. 
\end{align}
In the case $j=x$, this yields 
\begin{align}
  \label{eq:est-Y-e-x}
  Y_{e,x}\le 1+\frac{\kappa}{2}. 
\end{align}
We consider now $x<j\le y$. For $t\in[0,\frac12]$, we have 
$(1-t)^{-1}\le\min\{2,e^{2t}\}$. 
We apply this estimate to  $t=\frac{m_e}{W_{i-1,i}}\le\frac14$,
cf.\ \eqref{eq:m-W-14},
with $x<i<j$ and to $t=\frac{m+p_j}{W_{j-1,j}}\le\frac12$,
cf.\ \eqref{eq:est-m-p-Winv}, to obtain,
using \eqref{eq:est-C-m-sum} and $C_\delta\ge 1$,
\begin{align}
  \prod_{i=x+1}^{j-1}\!\!\frac{1}{1-\frac{m_e}{W_{i-1,i}}}
  \cdot \frac{1}{1-\frac{m_e+p_j}{W_{j-1,j}}}
  \le 2\!\!\prod_{i=x+1}^{j-1}\!\! e^{2\frac{m}{W_{i-1,i}}}
  \le 2e^{2m\sum_{i=x+1}^{y-1}\frac{1}{W_{i-1,i}}}
    \le 2e^{2\kappa/3}\le 2e.
\end{align}
Substituting this and \eqref{eq:est-1-over-1-C-delta} in the definition
\eqref{eq:def-Y-e-j} of $Y_{e,j}$, it follows, using in addition
$\log(1+\delta_j)\ge\delta_j/2$ since $\delta_j=j^{-(\alpha-\gamma-1)}\in[0,1]$,
\begin{align}
  Y_{e,j}& \le 4e\frac{2^{(j-x)m}}{(1+\delta_j)^{p_j}}
           =4e\cdot e^{(j-x)m\log 2-p_j\log(1+\delta_j)}
           \le 4e\cdot e^{(j-x)m\log 2-\frac12p_j\delta_j}\nonumber\\
&\le  4e\cdot e^{jm_x\log 2-\frac12p_j\delta_j}
  =4e\cdot e^{\cvier\overline W x^\gamma j
    -2\cvier\overline W j^{\gamma+1}}
  \le 4e\cdot e^{-\cvier\overline W j^{\gamma+1}}
          \le 4e\cdot e^{-\cvier\overline W j},  
\end{align}
where we used $x\le j$ and $\gamma\ge 0$ in the last two inequalities, respectively.
Summing this bound over $j=x+1,\ldots,y$ and adding the bound
\eqref{eq:est-Y-e-x} for $Y_{e,x}$ yields 
\begin{align}
  \label{eq:est-sum-Y-1+rho}
  \sum_{j=x}^yY_{e,j}
  \le 1+\frac{\kappa}{2}+4e\sum_{j=x+1}^y e^{-\cvier\overline W j}
  \le 1+\frac{\kappa}{2}+4e\frac{e^{-\cvier\overline W (x+1)}}{1-e^{-\cvier\overline W}}. 
\end{align}
Using $\overline W\ge\wnull$ and our choice \eqref{eq:def-W0} of
  $\wnull$, we observe $e^{-\cvier\overline W}\le\sqrt\kappa/6$. Together
  with $x\ge 1$ and $\kappa\le 1$ this yields 
\begin{align}
  4e\frac{e^{-\cvier\overline W (x+1)}}{1-e^{-\cvier\overline W}}
  \le 4e\frac{e^{-2\cvier\overline W}}{1-e^{-\cvier\overline W}}
  \le 4e\frac{\left(\frac{\sqrt\kappa}{6}\right)^2}{1-\frac16}
  =\frac{2e}{15}\kappa
  \le \frac{\kappa}{2}.
\end{align}
We conclude $\sum_{j=x}^yY_{e,j}\le 1+\kappa$. 
Substituting this bound 
and the definition \eqref{eq:def-Y-e} of $Y_e$ in
\eqref{eq:E-prod-B-prod-Y}, we conclude
\begin{align}
  \label{eq:E-prod-B-prod-Y2}
 & \E\left[\prod_{e\in E'\cup E''}B_{x_ey_e}^{m_e}\right]\le
  \prod_{e\in E'}Y_e\cdot \prod_{e\in E''}\sum_{j=x_e}^{y_e}Y_{e,j}
  \le (1+\kappa)^{|E''|}\prod_{e\in E'}\frac{1}{1-\frac{m_e}{W_e}}.
\end{align}
This proves the first inequality in claim \eqref{eq:main-thm-bound1}.
The second inequality in this claim follows from the bound
$(1-\frac{m_e}{W_e})^{-1}\le 1+\kappa$. 
Finally, \eqref{eq:bound-cosh-ux-uy-main} follows from \eqref{eq:main-thm-bound1}
and $\cosh(u_x-u_y)\le B_e$, which is a consequence of definition \eqref{eq:def-Sij-Bij}.
\end{proof}

We conclude this section with the proof of the quantitative version of Theorem
\ref{thm:one-dim}. 

\smallskip\noindent
\begin{proof}[Proof of Lemma~\ref{le:thm-quantitative}]
  In the proof of Theorem~\ref{thm:one-dim}, we showed that $\wnull$, $\cdrei$,
  and $\cvier$ can be chosen as in \eqref{eq:def-W0} and \eqref{eq:def-czwei}.
  For $\kappa=1$, we obtain $\wnull=\log 6/\cvier$, which yields the expression in 
  \eqref{eq:expression-W-0-special-case}.
\end{proof}

\section{Bounds in the hierarchical and long-range model}
\label{sec:bounds-hier-lr}

\subsection{Bounds in the hierarchical model}

\smallskip\noindent
\begin{proof}[Proof of Theorem~\ref{thm:hierarchical}]
  As described in the proof of Theorem 3.2 in \cite{disertori-merkl-rolles2023},
  $u_i$, $i\in\Lambda_N$, are identically distributed and by
  \cite[Corollary 2.3]{disertori-merkl-rolles2020} the law of $u_{i_1}$ with
  $i_1:=(0,\ldots,0)\in\{0,1\}^N$ agrees with the law of $u_{i_1}$ in an effective model
  on the complete graph with vertex set $A=\{j,i_1,\ldots,i_N\}$, weights
  $W_{i_{l-1},i_l}=2^{2l-3}w^H(l)$, and pinnings
  $h_{i_l}:=2^{l-1}h^H$, $2\le l\le N$, $h_{i_1}:=h^H$. 
  The other weights also need to have specific values in this effective
  model, but they are not displayed here because these values do not play any role
  in the subsequent arguments. The mentioned coincidence of laws implies
    \begin{align}\label{eq:hierantichain}
    \E_{W^H,h^H}^{\Lambda_N}[(\cosh u_{i_1})^m]
    =\E_{W,h}^A[(\cosh u_{i_1})^m]=\E_{W,h}^A[(\cosh( u_{i_1}-u_\rho))^m], 
  \end{align}
  where in the last step we used $u_\rho=0$.  
  We regard the pinning point $\rho$ as an additional vertex at the end of the antichain
  $j,i_1,\ldots,i_N$, writing $i_{N+1}:=\rho$. Note that by assumption
    \eqref{eq:ass-weights-pinning}, one has
    $W_{i_{l-1}i_l}=2^{2l-3}w^H(l)\ge\overline W l^\alpha$ for $l\in\{2,\ldots,N\}$
    and $W_{i_N,i_{N+1}}=h_{i_N}=2^{N-1}h^H\ge \overline W (N+1)^\alpha$.
    Hence, we can apply the bound \eqref{eq:bound-cosh-ux-uy-main} in Theorem
    \ref{thm:one-dim} for
    $x=i_1$, $y=i_{N+1}$, and case P2 which gives claim \eqref{eq:hierarchical-est-cosh-ui}.

    Recall that $\ceins=\frac23 e^{-\sqrt{2}\frac{\alpha-1}{\alpha-3}}$; cf.\ its
    definition in Theorem \ref{thm:est-cosh-euclidean}. 
    Taking $\gamma=0$ in \eqref{eq:def-W0} and \eqref{eq:def-czwei} and using
    $\kappa\le(16\ceins\log 2)^{-1}$ from \eqref{eq:assumptions-kappa-overline-W-m}
    yields 
   \begin{align}
     \label{eq:ceins-overline-W-with-gamma-equal-zero1}
     \cdrei=&\cdrei(\kappa,\alpha)=\min\left\{\frac23\kappa e^{-\sqrt{2}\frac{\alpha-1}{\alpha-3}},\frac{1}{16\log 2}\right\}
              =\ceins\kappa, \\
     \wnull=&\wnull(\kappa,\alpha,\gamma)=\frac{1}{2\cdrei}\log_2\frac{36}{\kappa}
              =\frac{1}{2\ceins\kappa}\log_2\frac{36}{\kappa}.
              \label{eq:ceins-overline-W-with-gamma-equal-zero2}
   \end{align}
   Hence, by \eqref{eq:assumptions-kappa-overline-W-m}, the assumptions
   $\overline W\ge\wnull$ and $0\le m\le\cdrei\overline W$
   of Theorem \ref{thm:one-dim} are satisfied. 

   We treat now the two examples from Theorem \ref{thm:est-cosh-euclidean}, which 
   are referred to in Theorem \ref{thm:hierarchical}. For the first example,
   let $s\in(0,1)$ and $\kappa=\overline W^{-s}$. For $\overline W$ large
   enough, depending on $\alpha$ and $s$, one has $\kappa\le(16\ceins\log 2)^{-1}$ and
   $\frac{1}{2\ceins}\frac{1}{\kappa}\log_2\frac{36}{\kappa}
   =\frac{1}{2\ceins}\overline W^s\log_2(36\overline W^s)\le\overline W$.

   For the second example, let $\kappa=\czwei\frac{\log\overline W}{\overline W}$
   with $\czwei>1/(2\ceins\log 2)$.
   For $\overline W$ large enough, depending on $\alpha$,
   it follows $\kappa\le(16\ceins\log 2)^{-1}$ and
   \begin{align}
    \frac{1}{2\ceins}\frac{1}{\kappa}\log_2\frac{36}{\kappa} 
     =\frac{1}{2\ceins\czwei\log 2}\overline W\left(
     1+\frac{\log 36-\log\czwei-\log\log\overline W}{\log\overline W}\right)
     \le\overline W. 
   \end{align}
   Thus, the conditions in \eqref{eq:assumptions-kappa-overline-W-m}
   are satisfied in both cases.   
\end{proof}

\begin{corollary}
  Under the assumptions of Theorem~\ref{thm:hierarchical}, for $\kappa,\overline W,m$
  as in Theorem~\ref{thm:hierarchical}, and for all $i,j\in\Lambda_N$, one has
 \begin{align}
   \label{eq:hierarchical-est-cosh-ui-uj}
    \E_{W^H,h^H}^{\Lambda_N}[(\cosh (u_i-u_j))^{m/2}]\le 2^{m/2}(1+\kappa).
  \end{align}
\end{corollary}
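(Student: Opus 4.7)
The plan is to reduce the claim about $\cosh(u_i-u_j)$ to the one-point bound \eqref{eq:hierarchical-est-cosh-ui} via an elementary hyperbolic inequality followed by Cauchy--Schwarz. The obstacle is essentially cosmetic, so I expect no serious difficulty.

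First, I would use the identity $\cosh(a-b)=\cosh a\cosh b-\sinh a\sinh b$ together with the pointwise bound $|\sinh a\sinh b|\le\cosh a\cosh b$ (since $|\sinh t|\le\cosh t$ for all real $t$). This gives the convenient factorized bound
\begin{align}
\cosh(u_i-u_j)\le 2\cosh u_i\,\cosh u_j.
\end{align}
Raising to the power $m/2\ge 0$, which preserves the inequality since both sides are non-negative, yields
\begin{align}
(\cosh(u_i-u_j))^{m/2}\le 2^{m/2}(\cosh u_i)^{m/2}(\cosh u_j)^{m/2}.
\end{align}

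Next, I would take expectation with respect to $\E_{W^H,h^H}^{\Lambda_N}$ and apply the Cauchy--Schwarz inequality to the product on the right. This produces
\begin{align}
\E_{W^H,h^H}^{\Lambda_N}\left[(\cosh u_i)^{m/2}(\cosh u_j)^{m/2}\right]
\le\sqrt{\E_{W^H,h^H}^{\Lambda_N}[(\cosh u_i)^{m}]\cdot\E_{W^H,h^H}^{\Lambda_N}[(\cosh u_j)^{m}]}.
\end{align}
Finally, I would invoke Theorem~\ref{thm:hierarchical} twice, once for $i$ and once for $j$: under the stated hypotheses on $\kappa$, $\overline W$, and $m$, each one-point expectation is bounded by $1+\kappa$, so the square root is also at most $1+\kappa$. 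Combining with the factor $2^{m/2}$ from the initial pointwise bound yields \eqref{eq:hierarchical-est-cosh-ui-uj}. No additional hypothesis is required because both $i$ and $j$ lie in $\Lambda_N$, so Theorem~\ref{thm:hierarchical} applies uniformly.
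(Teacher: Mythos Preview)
Your proof is correct and follows exactly the same route as the paper: the pointwise bound $\cosh(u_i-u_j)\le 2\cosh u_i\cosh u_j$, then Cauchy--Schwarz, then two applications of \eqref{eq:hierarchical-est-cosh-ui}.
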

\begin{proof}
  We observe that $\cosh(u_i-u_j)\le 2\cosh u_i\cosh u_j$. Applying first Cauchy-Schwarz
  and then \eqref{eq:hierarchical-est-cosh-ui}, we deduce \eqref{eq:hierarchical-est-cosh-ui-uj}
  as follows:
  \begin{align}
    \E_{W^H,h^H}^{\Lambda_N}&[(\cosh (u_i-u_j))^{m/2}]
                              \le 2^{m/2}\E_{W^H,h^H}^{\Lambda_N}[(\cosh u_i)^{m/2}(\cosh u_j)^{m/2}]\nonumber\\
    &\le 2^{m/2}\E_{W^H,h^H}^{\Lambda_N}[(\cosh u_i)^m]^{1/2}\E_{W^H,h^H}^{\Lambda_N}[(\cosh u_j)^m]^{1/2}
          \le 2^{m/2}(1+\kappa).
          \label{eq:from-cosh-ui-to-ui-minus-uj}
  \end{align}
\end{proof}

\paragraph{Remark.}
  It is straightforward to show the analogue result for the long-range Euclidean model. In the hierarchical model, 
one could even get rid of the factor $2^{m/2}$ in \eqref{eq:hierarchical-est-cosh-ui-uj}
by connecting the vertices $i$ and $j$ by an antichain going twice over all levels
below the least common ancestor in the binary tree. This antichain is described in
Remark~4.2 and visualized in Figure 4.3 in
\cite{disertori-merkl-rolles2020}. As a result,
one obtains a one-dimensional effective model with nearest-neighbor weights that increase up to the middle
and decrease afterwards. The strategy of the proof of Theorem~\ref{thm:hierarchical}
can be adapted to this case.
However, we do not know how to transfer this improvement to the long-range
Euclidean model because Poudevigne's monotonicity result
\cite[Theorem 6]{poudevigne22} is not applicable to $(\cosh (u_i-u_j))^{m/2}$
as it is to $(\cosh u_i)^m$.

\subsection{Bounds in the long-range model}

The proof of Theorem \ref{thm:est-cosh-euclidean} follows the lines of the proof
of \cite[Theorem 1.4]{disertori-merkl-rolles2023}
up to minor modifications. To improve readability of this paper, we repeat the argument. 

As in \cite{disertori-merkl-rolles2023}, we use a comparison with a hierarchical
$\htwo$ model. Let $d\ge 1$ and $N\in\N$ and consider the box
$\Lambda_N:=\{0,1,\ldots,2^N-1\}^d$.
We use the bijection between $\Lambda_N$ with $\{0,1\}^{Nd}$ given by 
\begin{align}
  \label{eq:def-phi}
  \phi:\Lambda_N\to\{0,1\}^{Nd},\quad 
  \phi(i_0,\ldots,i_{d-1})=(z_0,\ldots,z_{Nd-1})
\end{align}
with $z_n\in\{0,1\}$ the $\lfloor n/d\rfloor$-th digit in the
binary expansion $i_l=\sum_{k=0}^{N-1}z_{dk+l}2^k$ and 
$l=n\,\mod\, d$. One may interpret the elements of $\{0,1\}^{Nd}$ as the leaves of
a binary tree. Using this interpretation, for $z,z'\in\{0,1\}^{Nd}$,
the hierarchical distance $d_H(z,z')$ equals the distance to the
least common ancestor in the binary tree or equivalently
\begin{align}
  d_H(z,z')=\min(\{n\in\{0,\ldots,Nd-1\}:\, z_m=z_m'\text{ for all }m\ge n\}\cup\{Nd\}). 
\end{align}

\smallskip\noindent
\begin{proof}[Proof of Theorem \ref{thm:est-cosh-euclidean}]
  Note that $\ceins\kappa\overline W\ge -\log_2(\kappa/36)/2\ge\log_26>1$ and
  hence there is some $m\ge 1$ fulfilling \eqref{eq:assumptions-kappa-overline-W-m}.
  Since $\E^{\Lambda_N}_{W,h}\left[(\cosh u_i)^m\right]$ is increasing in $m$, it suffices
  to prove the claim for $m\ge 1$. 
  We consider the long-range model described in Section \ref{subse:long-range} on 
  the box $\Lambda_N$ with wired boundary with long-range weights
  $W^+_{ij}:=W_{ij}=w(\|i-j\|_\infty)$ and pinning $h^+_j:=h_j$ given
  in \eqref{eq:pinning-wired-bc}. We compare it with a hierarchical model defined on the same
  box with the weights 
\begin{align}
  W^-_{ij}:=W^H_{ij}=w^H(d_H(\phi(i),\phi(j)))\text{ with }
  w^H(r):=w(2^{\lceil r/d\rceil}) 
\end{align}
and uniform pinning given by $h^-:=h^H:=\min_{j\in\Lambda_N}h_j$.
\cite[Lemma 3.4]{disertori-merkl-rolles2023} states that 
$2^{\lceil d_H(\phi(i),\phi(j))/d\rceil}>\|i-j\|_\infty$ for all $i,j\in\Lambda_N$. Using that
$w$ is monotonically decreasing, it follows that 
\begin{align}
  W_{ij}=w(\|i-j\|_\infty)\ge w\left(2^{\lceil d_H(\phi(i),\phi(j))/d\rceil}\right)
  = W^H_{ij},\quad h_i\ge h^H,
\end{align}
for $i,j\in\Lambda_N$. Since $(\cosh u_i)^m$ is a convex function of
$e^{u_i}$ whenever $m\ge 1$, \cite[Theorem 6]{poudevigne22} yields
\begin{align}
  \E^{\Lambda_N}_{W,h}\left[(\cosh u_i)^m\right]\le
  \E^{\Lambda_N}_{W^H,h^H}\left[(\cosh u_i)^m\right] 
\end{align}
for all $i\in\Lambda_N$.
Using the bijection $\phi:\Lambda_N\to\{0,1\}^{Nd}$ from
\eqref{eq:def-phi}, we identify $\Lambda_N$ with $\{0,1\}^{Nd}$.
This yields the hierarchical model from Section \ref{sec:hierarchical-model}
with $N$ replaced by $Nd$. In order to apply Theorem~\ref{thm:hierarchical}
we estimate the weights using the lower bound \eqref{eq:cond1-w} for $w$ 
\begin{align}
  w^H(r)=&w(2^{\lceil r/d\rceil})
           \ge 8\overline W2^{2d}\frac{(d \lceil r/d\rceil)^\alpha}{2^{2d\lceil r/d\rceil}}
           \ge 8\overline W2^{2d}\frac{r^\alpha}{2^{2d(1+r/d)}}
           =8\overline W2^{-2r}r^\alpha. 
\end{align}
In the following estimate, the first inequality is analogous to
\cite[(3.22) and (3.23)]{disertori-merkl-rolles2023}, and the second inequality
uses again the lower bound \eqref{eq:cond1-w} for $w$. 
We find 
\begin{align}
  h^H\ge
   2^{(N+1)d} w(2^{N+1})\ge 
    2^{(N+1)d}\cdot 8\overline W 2^{2d}  \frac{(Nd+d)^\alpha}{2^{2(N+1)d}}
    \ge 2\overline W  2^{-Nd}(Nd+1)^\alpha
  .
\end{align}
Note that we have dropped a factor $2^{2+d}$ in the last inequality. 
This verifies assumption \eqref{eq:ass-weights-pinning} of Theorem~\ref{thm:hierarchical}
with $N$ substituted by $Nd$ and concludes the proof. 
\end{proof}

\smallskip\noindent
\textbf{Acknowledgement.}  This work was supported by the Deutsche Forschungsgemeinschaft (DFG, German Research Foundation) priority program SPP 2265 Random Geometric Systems and partly funded by DFG -- Germany's Excellence Strategy - GZ 2047/1, projekt-id 390685813. 
M.D.\ acknowledges the hospitality of the Newton Institute in Cambridge, where part of this work has been carried out.

\end{document}